\documentclass{amsart}
\usepackage{amsmath}
\usepackage{amssymb}
\usepackage{graphicx}
\usepackage{amscd}

\setcounter{MaxMatrixCols}{10}

\newtheorem{theorem}{Theorem}
\theoremstyle{plain}

\newtheorem{lemma}{Lemma}

\newtheorem{remark}{Remark}

\numberwithin{equation}{section}

\begin{document}
\title[Estimates of GPI gap of decomposability]{Statistical estimation of
gap of decomposability of the general poverty index}

\author{Mohamed Cheikh Haidara $^{*}$}
\email{chaidara@ufrsat.org}
\address{LERSTAD, Universit\'{e} de Saint-Louis $^{*}$}
\urladdr{www.statpas.net/cva.php?email=chheikhh@yahoo.fr}

\author{ Gane Samb LO $^{**}$}
\email{ganesamblo@ufrsat.org}
\address{$^{**}$ Laboratoire de Statistiques Th\'eoriques et Appliqu\'ees (LSTA) \\Universit\'e Pierre et Marie Curie (UPMC) France}
\address{LERSTAD, Universit\'{e} de Saint-Louis.}
\urladdr{www.lsta.upmc.fr/gslo}

\begin{abstract}

For the decomposability property is very a practical one in Welfare
analysis, most researchers and users favor decomposable poverty indices such
as the Foster-Greer-Thorbeck poverty index. This may lead to neglect the so
important weighted indices like the Kakwani and Shorrocks ones which have
interesting other properties in Welfare analysis. To face up to this
problem, we give in this paper, statistical estimations of the gap of
decomposability of a large class of such indices using the General Poverty
Indice (GPI) and of a new asymptotic representation Theorem for it, in terms
of functional empirical processes theory. The results then enable
independent handling of targeted groups and next global reporting with
significant confidence intervals. Data-driven examples are given with real
data.
\end{abstract}

\maketitle


\large

\section{Introduction}

\noindent We are concerned in this paper with the statistical estimation of
the gap of decomposability of the class of the statistical poverty indices
in general. Suppose that we have some statistic of the functional form $%
J_{n}=J(Y_{1},...,Y_{n})$ where $\mathcal{E}=\{Y_{1},...,Y_{n}\}$ is a
sample of the random variable $Y$ defined on a probability space $%
(\Omega ,\mathcal{A},\mathbb{P}$) and drawn from some specific population.
Now, suppose that this population is divided into K subgroups $%
S_{1},...,S_{K}$ and let us, for each $i\in \{1,...,K\}$, denote the subset
of the random sample $\{Y_{1},...,Y_{n}\}$ coming from $S_{i}$ by $\mathcal{E%
}_{i}=\{Y_{1,i},...,Y_{n_{i},i}\}$ and then put $%
J_{n_{i}}(i)=J(Y_{1,i},...,Y_{n_{i},i}).$ The statistic $J_{n}$ is said to
be decomposable whenever one always has%
\begin{equation*}
J_{n}=\frac{1}{n}\sum_{i=1}^{K}n_{i}J_{n_{i}}(i),
\end{equation*}%
whatever may be the way in which $\mathcal{E}$ is partitioned into the $%
\mathcal{E}_{i}$ '$s$ $(i=1,...,K).$ This property is a very practical one
when dealing with the poverty measures or welfare measures in general for
the following reason. If we are willing to monitor the poverty situation, it
may be very useful to target some sensitive areas or subgroups. By dividing the
population into targeted groups, and estimating the poverty intensity by $%
J_{n_{i}}(i)$ (resp. variation of poverty by $\Delta J_{n_{i}}(i))$ in each
group, one would be able to report the poverty intensity (resp. global
poverty variation) by (\ref{decomp1}) (resp. $\Delta J_{n}=\frac{1}{n}%
\sum_{i=1}^{K}n_{i}\Delta J_{n_{i}}(i)$), provided that the samples are the
same as it is the case in longitudinal data$.$ Thus, decomposability allows
an independent handling of poverty for different areas and next an easy
reconstruction of the global situation. \bigskip

\noindent Now in the specific case of poverty indices, we mainly have the
non-weighted ones and the weighted ones. The statistics in the first case
are automatically decomposable and then are mostly preferred by users.
However, the weighted measures, which in general are not decomposable, have
very interesting properties in poverty analysis. Dismissing them only for
non-decomposability would result in a disaster. We tackle this problem in
this paper. Indeed, by estimating the following gap of decomposability%
\begin{equation}
gd_{n}=J_{n}-\frac{1}{n}\sum_{i=1}^{K}n_{i}J_{n_{i}}(i)  \label{decomp1}
\end{equation}%
with significant confidence intervals, we would be able to handle separated
analyses in the subgroups and report the global case and, at the same time,
make benefit of the other properties of such statistics.\bigskip

\noindent The remainder of the paper is organized as follows. In Section \ref%
{sec2}, we give a brief introduction of the poverty measures and to the
General Poverty Index (GPI). In Section \ref{sec3}, we return back to the
decomposability problem by describing the drawing scheme under which the
results are given. In Section \ref{sec4}, we state the results which are
applied to the Senegalese and Mauritanian data in Section \ref{sec5}. The
proofs are given in Section \ref{sec6}. The concluding remarks are in
Section \ref{sec7}. The paper is finished by an appendix in Section \ref{sec8}.\\

\section{A brief reminder on Poverty measures}

\label{sec2}

\noindent We consider a population of individuals or households, each of
which having a random income or expenditure $Y$\ with distribution function $%
G(y)=\mathbb{P}(Y\leq y).$\ In the sequel, we use $Y$ as an income variable
although it might be any positive random variable. An individual is
classified as poor\ whenever his income or expenditure $Y$\ fulfills $Y<Z,$\
where $Z$\ is a specified threshold level (the poverty line). \newline

\noindent Consider also a random sample $Y_{1},Y_{2},...Y_{n}$\ of size $n$\
of incomes, with empirical distribution function $G_{n}(y)=n^{-1}\#\left\{
Y_{i}\leq y:1\leq i\leq n\right\} $. The number of poor individuals within
the sample is then equal to $Q_{n}=nG_{n}(Z)$. And, from now on, all the
random elements used in the paper are defined on the same probability space $%
(\Omega ,\mathcal{A},\mathbb{P})$.\\

\noindent Given these preliminaries, we introduce measurable functions $%
A(p,q,z)$, $w(t)$, and $d(t)$\ of $p,q\in \mathbb{N},$\ and $z,t\in \mathbb{R%
}$. Set $B(Q_{n})=\sum_{i=1}^{Q_{n}}w(i).$ \newline

\noindent Let $Y_{1,n}\leq Y_{2,n}\leq ...\leq Y_{n,n}$\ be the order
statistics of the sample $Y_{1},Y_{2},...Y_{n}$\ of $Y$. We consider general
poverty indices (GPI)\ of the form 
\begin{equation}
GPI_{n}=\delta \left( \frac{A(Q_{n},n,Z)}{nB(Q_{n},n)}\sum_{j=1}^{Q_{n}}w(%
\mu _{1}n+\mu _{2}Q_{n}-\mu _{3}j+\mu _{4})\text{\ }d\left( \frac{Z-Y_{j,n}}{%
Z}\right) \right) ,  \label{ssl01}
\end{equation}%
where $\mu _{1},\mu _{2},\mu _{3},\mu _{4}$\ are constants. This global form
of poverty indices was introduced in \cite{lo2} (see also \cite{lo}, \cite%
{lo2} and \cite{sl}) as an attempt to unify the large number of poverty
indices that have been introduced in the literature since the pioneering
work of the Nobel Prize winner, Amartya Sen(1976) who first derived poverty
measures (see \cite{sen}) from an axiomatic point of view. A survey of these
indices is to be found in Zheng \cite{zheng}, who also discussed their
introduction, from an axiomatic point of view. We will cite a few number of
them here just to make clear the minds and prepare the data-driven
applications in Section \ref{sec5}.\bigskip

\noindent One may devide the poverty indices into two classes. The first
includes the nonweighted ones. The most popular of them is the
Foster-Greer-Thorbecke(1984) \cite{fgt} class which is defined for $\alpha
\geq 0$, by 
\begin{equation}
FGT(\alpha )=\frac{1}{n}\overset{Q_{n}}{\underset{j=1}{\sum }}\left( \frac{%
Z-Y_{j,n}}{Z}\right) ^{\alpha }.  \label{ex0}
\end{equation}%
For $\alpha =0,$ (\ref{ex0}) reduces to $Q_{n}/n,$\ the headcount of poor\
individuals. For $\alpha =1$\ and $\alpha =2,$\ it is respectively
interpreted as the severity of poverty and the depth in poverty. (\ref{ex0})
is obtained from (\ref{ssl01}) by taking 
\begin{equation*}
\begin{array}{cccc}
\delta =I_{d}, & w\equiv 1, & d(u)=u^{\alpha }, & B(Q_{n},n)=Q_{n}\text{ and 
}A(Q_{n},n,Z)=Q_{n}.%
\end{array}%
\end{equation*}%
\noindent Next, we have for $\alpha \geq 0$, 
\begin{equation*}
C(\alpha )=\frac{1}{n}\overset{Q_{n}}{\underset{j=1}{\sum }}\left( 1-\left( 
\frac{Y_{j,n}}{Z}\right) ^{\alpha }\right) ,
\end{equation*}%
\noindent the Chakravarty family class of poverty measures is obtained from (%
\ref{ssl01}) by taking $Y^{\alpha }$ and $Z^{\alpha }$ as respectively
transformed income $Y$\ and threshold $Z$ and 
\begin{equation*}
\begin{array}{cccc}
\delta =I_{d}, & w\equiv 1, & d(u)=u, & B(Q_{n},n)=Q_{n}\text{ and }%
A(Q_{n},n,Z)=Q_{n}.%
\end{array}%
\end{equation*}%
The statistics in this class are decomposable and are not concerned by the
present work. \bigskip

\noindent The second class consists of the weighted indices. We mention here
two of its famous members. The Sen(1976) index (see \cite{sen}) 
\begin{equation}
P_{SE,n}=\frac{2}{n(Q_{n}+1)}\overset{Q_{n}}{\underset{j=1}{\sum }}%
(Q_{n}-j+1)\left( \frac{Z-Y_{j,n}}{Z}\right)  \label{ex2}
\end{equation}%
(\ref{ex2}) is obtained from (\ref{ssl01}), by taking 
\begin{equation*}
d(u)=u,\text{ }w(u)=u,\text{ }A(Q_{n},n,Z)=Q_{n},
\end{equation*}%
\begin{equation*}
B(Q_{n})=Q_{n}(Q_{n}+1)/2,\text{ }\mu _{1}=0\text{ and }\mu _{3}=\mu
_{2}=\mu _{4}=1.
\end{equation*}%
The Shorrocks(1995) index (see \cite{shorrocks}) 
\begin{equation}
P_{SH,n}=\frac{1}{n^{2}}\overset{Q_{n}}{\underset{j=1}{\sum }}%
(2n-2j+1)\left( \frac{Z-Y_{j,n}}{Z}\right)  \label{ex3}
\end{equation}%

\noindent is obtained from (\ref{ssl01}) by taking

\begin{equation*}
B(Q_{n},n)=Q_{n}(Q_{n}+1)/2,\text{ }A(n,Q_{n},Z)=Q_{n}(Q_{n}+1)/2n,
\end{equation*}%
\ 
\begin{equation*}
\begin{array}{ccccccc}
\delta =I_{d}, & w(u)\equiv (u), & d(u)=u, & \mu _{1}=2, & \mu _{2}=0, & \mu
_{3}=2 & \mu _{4}=1.%
\end{array}%
\end{equation*}%
Measures (\ref{ex2}) and (\ref{ex3}) evaluate the poverty intensity by giving a more
important weight on the poorest individuals. This means that a small
decrease of the intensity on the poorest household indicates significant
improvement in the population.\\

\noindent In the applications, we mainly deal with these two specific
measures because of their importance in poverty analysis. Notice that the
Thon measure (\cite{thon}) is different from the Shorrocks one only by their normalization
coefficients which are respectively $n(n+1)$ and $n^{2},$ so that they have
the same asymptotic behavior. Finally, we have the following generalization
of the Sen measure given by Kakwani(1980) \cite{kakwani},

\begin{equation*}
J_{n}(k)=\frac{Q}{n\sum_{j=1}^{Q}j^{k}}\sum_{j=1}^{Q}(Q-j+1)^{k}d\left( 
\frac{Z-Y_{j,n}}{Z}\right) ,
\end{equation*}

\noindent where $k$ is a positive parameter. Notice that $J_{n}(1)$ is the
Sen measure. Notice also that, under mild conditions, $J_{n}$ converges in
probability to the Exact General Poverty Index (EGPI) (see \cite{barrett}, 
\cite{bchowzheng}, \cite{bfzheng} and \cite{lo}), 
\begin{equation}
J(G)=GPI=\int_{0}^{Z}L_{1}(u,G)\text{ }d\left( \frac{Z-u)}{Z}\right) dG(u),
\label{ssl010}
\end{equation}%
where $L_{1}$ is some weight function depending on the distribution
function. This result will be proved again in Theorem \ref{theo1} below.

\section{Statistical decomposability}

\label{sec3}

\noindent From now, we suppose that our studied population of households is
divided into $K$ subgroup such that, for each $i\in \{1,...,K\}$, the
probability that a randomly drawn household comes from the $i^{th}$ subgroup
is $p_{i}>0$, with $p_{1}+...+p_{K}=1$. Let us suppose that we draw a sample
of size n from the population : $Y_{1},...,Y_{n}$ and let us denote those of
the $n_{i}^{\ast }$ observations coming from the $i^{th}$ subgroup, $(1\leq
i\leq K)$ by $Y_{i,j},$ $j=1,...,n_{i}^{\ast }.$ Let $J_{n_{i}^{\ast
}}(G_{i})=J_{n_{i}^{\ast }}(Y_{i,1},...,Y_{i,n_{i}^{\ast }})$ the empirical
index measured on the $i^{th}$ subgroup and $J_{n}(G)$ the global index.
Clearly, decomposability implies for all $n\geq 1$,%
\begin{equation*}
gd_{n}=J_{n}-\frac{1}{n}\sum_{i=1}^{K}n_{i}^{\ast }J_{n_{i}^{\ast }}\equiv 0.
\end{equation*}%
Surely, $n^{\ast }=(n_{1}^{\ast },...n_{K}^{\ast })$ follows a multinomial
law with parameters $n$ and $p=(p_{1},...,p_{K}).$ Since each $p_{i}>0,$ we
have that for each $1\leq i \leq K$, $n_{i}^{\ast }\rightarrow \infty $ $a.s.$, as $n\rightarrow
\infty $. We will have by (\ref{decomp1}) and by (\ref{ssl010}),%
\begin{equation*}
gd_{n}=J_{n}(G)-\frac{1}{n}\sum_{i=1}^{K}n_{i}^{\ast }J_{n_{i}^{\ast
}}(G_{i})\rightarrow _{P}gd=J(G)-\sum_{i=1}^{K}p_{i}J_{i}(G_{i}).
\end{equation*}%
The right member of this equation is the exact gap of decomposability $gd$.
It follows that $gd$ is zero if the distribution of the income is the same
over all the population, that the more homogeneous the income is over the
population, the lower the gap of decomposability $gd$ is. As a first result,
we get that the decomposability does not, asymptotically at least, matter
for a more or less homogeneous population. That is, the decomposability is
not only a functional form matter (of the index), but it is also a
statistical one since whatever might be the index, decomposability is
asymptotically obtained when the subgroups have the same distribution. For
example, it has been pointed out in (\cite{halo}), for the Senegalese
poverty databases from 1996 to 2001, that the gaps of decomposability were
very low for various stratifications (in regions, gender, ethnic groups,
etc.). The apparent reason was the homogeneity of the income. Such results
are confirmed in Section \ref{sec5}.\bigskip

\noindent Now we want to find the law of 
\begin{equation*}
gd_{n}^{\ast }=\sqrt{n}(gd_{n}-gd)
\end{equation*}%
for a more accurate estimation of $gd$ by confidence intervals. At this
step, we have to precise our random scheme. We put a probability space ($%
\Omega _{1}\times \Omega _{2},\mathcal{A}_{1}\otimes \mathcal{A}_{2},\mathbb{%
P}_{1}\otimes \mathbb{P}_{2})$ and put $\mathbb{P=P}_{1}\otimes \mathbb{P}%
_{2}.$ We draw the observations in the following way. In each trial, we draw
a subgroup, the $ith$ subgroup $(\mathcal{E}_{i})$\ having the occurring
probability $p_{i}.$\ And we put 
\begin{equation*}
\pi _{i,j}(\omega _{1})=\mathbb{I}_{(\text{the }i^{th}\text{ subgroup is
drawn at the }j^{th}\text{ trial})}(\omega _{1}),
\end{equation*}%
\noindent $1\leq i\leq K,1\leq j\leq n$. Now, given that the $i^{th}$
subgroup is drawn at the $j^{th}$ trial, we pick one individual in this
subgroup and observe its income $Y_{j}(\omega _{1},\omega _{2}).$ We then
have the observations 
\begin{equation*}
\{Y_{j}(\omega _{1},\omega _{2}),\text{ }1\leq j\leq n\}.
\end{equation*}%
We have these simple facts. First, \ for $1\leq i\leq K,$ 
\begin{equation}
n_{i}^{\ast }=\sum_{j=1}^{n}\pi _{i,j}.
\end{equation}%
Secondly, the distribution of $Y_{j}$ given $(\pi _{i,j}=1)$, is $G_{i}$,
that is 
\begin{equation*}
\mathbb{P}(Y_{j}\leq y\text{ }\diagup \pi _{i,j}=1)=G_{i}(y),
\end{equation*}%
Next 
\begin{eqnarray*}
\forall (y &\in &\mathbb{R)}, \\
\mathbb{P}(Y_{j} &\leq &y\text{ })=\sum_{i=1}^{K}\mathbb{P}(\pi _{i,j}=1)%
\mathbb{P}(Y_{j}\leq y\text{ }\diagup \pi
_{i,j}=1)=\sum_{i=1}^{K}p_{i}G_{i}(y).
\end{eqnarray*}%
We conclude that $\{Y_{1},...,Y_{n}\}$ is an independent sample drawn from $%
G(y)$ $=\sum_{i=1}^{K}p_{i}G_{i}(y),$ the mixture of the distribution
functions of the subgroups incomes. Finally, we readily see that
conditionally on $n^{\ast }\equiv (n_{1}^{\ast },n_{2}^{\ast
},...,n_{K}^{\ast })=(n_{1},n_{2},...,n_{K})\equiv \overline{n}$ with $%
n_{1}+n_{2}+...+n_{K}=n,$ $\{Y_{i,j},$ $1\leq j\leq n_{i}^{\ast }\}$ are
independent random variables with distribution function $G_{i}$.

\section{Our results}

\label{sec4}

\noindent The results stated here hold for a very large class of poverty
measures summarized in the GPI. This is why we need the representation
Theorem of the GPI in \cite{slrep}. In fact, we do not need here the
complete form of \cite{slrep}, but a special case of it, based on the
assumptions described below. For that, suppose that $G_{i}$ $(1\leq i\leq K)$%
, \ is the distribution function of the income for the $ith$ subgroup, and $%
G $ is the distribution function of the income for the global population.
Let also $\gamma (x)=d\left( \frac{Z-x}{Z}\right) 1_{(x\leq Z)}$ and $%
e(x)=1_{(x\leq Z)}$. The following assumptions are required.

\begin{itemize}
\item[(HD0)] $G_{0}(Z)\in ]0,1[$ for $G_{0}\in \{G,G_{1},...,G_{K}\}.$

\item[(HD1)] There exist a function $h(p,q)$ of $(p,q)\in \mathbb{N}^{2}$
and a function $c(s,t)$ of $(s,t)\in (0,1)^{2}$ such that, as $n\rightarrow
+\infty $, 
\begin{multline*}
\max_{1\leq j\leq Q}\left\vert A(n,Q)h^{-1}(n,Q)w(\mu _{1}n+\mu _{2}Q-\mu
_{3}j+\mu _{4})-c(Q/n,j/n)\right\vert \\
=o_{P}(n^{-1/2}).
\end{multline*}

\item[(HD2)] For the function $h$ found in $(HD1)$, there exists a function $\pi (s,t)$ of $(s,t)\in \mathbb{R}^{2}$
such that as $n\rightarrow +\infty $, 
\begin{equation*}
\max_{1\leq j\leq Q}\left\vert w(j)h^{-1}(n,Q)-\frac{1}{n}\pi
(Q/n,j/n)\right\vert =o_{P}(n^{-3/2}).
\end{equation*}

\item[(HD3)] The bivariate functions $c$ and $\pi $ have continuous partial
differentials.

\item[(HD4)] For a fixed $x$, the functions $y\rightarrow \frac{\partial c}{%
\partial y}(x,y)$ and $y\rightarrow \frac{\partial \pi }{\partial y}(x,y)$
are monotone.

\item[(HD5)] $G_{0}$ is strictly increasing for any $G_{0}\in
\{G,G_{1},...,G_{K}\}$.

\item[(HD6)] We have for any $G_{0}\in \{G,G_{1},...,G_{K}\}$\newline
\begin{equation*}
0<H_{c}(G_{0})=\int c(G_{0}(Z),G_{0}(y))\gamma (y)dG_{0}(y)<+\infty
\end{equation*}%
and 
\begin{equation*}
0<H_{\pi }(G_{0})=\int \pi (G_{0}(Z),G_{0}(y))e(y)dG_{0}(y)<+\infty
\end{equation*}%

\noindent We also need the following definitions, for $G_{0}\in \{G,G_{1},...,G_{K}\},$%
\newline
\begin{equation*}
J(G_{0})=H_{c}(G_{0})/H_{\pi }(G_{0}),
\end{equation*}%
\begin{equation}
g_{0}(\cdot )=H_{\pi }^{-1}(G_{0})g_{c,0}(\cdot )-H_{c}(G_{0})H_{\pi
}^{-2}(G_{0})g_{\pi ,0}(\cdot )+K(G_{0})e(\cdot ),  \label{g0}
\end{equation}%

\noindent with
 
\begin{equation}
g_{c,0}(\cdot )=c(G_{0}(Z),G_{0}(\cdot ))\gamma (\cdot ),\text{ }g_{\pi
,0}(\cdot )=\pi (G_{0}(Z),G_{0}(\cdot ))e(\cdot ),  \label{meth2}
\end{equation}%
\begin{equation}
K(G_{0})=H_{\pi }^{-1}(G_{0})K_{c}(G_{0})-H_{c}(G_{0})H_{\pi
}^{-2}(G_{0})K_{\pi }(G_{0})  \label{meth3}
\end{equation}%

\noindent with
 
\begin{multline}
K_{c}(G_{0})=\int_{0}^{1}\frac{\partial c}{\partial x}(G_{0}(Z),s)\gamma
(G_{0}^{-1}(s))ds,  \label{meth4} \\
\text{ }K_{\pi }(G_{0})=\int_{0}^{1}\frac{\partial \pi }{\partial x}%
(G_{0}(Z),s)e(G_{0}^{-1}(s))ds,
\end{multline}

\begin{equation}
\nu _{0}(\cdot )=H_{\pi }^{-1}(G_{0})\nu _{c,0}(\cdot )-H_{c}(G_{0})H_{\pi
}^{-2}(G_{0})\nu _{\pi ,0}(\cdot ),  \label{nu0}
\end{equation}

\noindent where 

\begin{equation*}
\nu _{c,0}(\cdot )=\frac{\partial c}{\partial y}(G_{0}(Z),G_{0}(\cdot
))\gamma (\cdot ),\nu _{\pi ,0}(\cdot )=\frac{\partial \pi }{\partial y}%
(G_{0}(Z),G_{0}(\cdot ))e(\cdot ).
\end{equation*}
\end{itemize}

\noindent with the conventions that for $G_{0}=G,$ we denote $g_{0}=g$ and $%
\nu _{0}=\nu .$ For \ $G_{0}=G_{i},1\leq i\leq K,$ we put $g_{0}=g_{i}$ and $%
\nu _{0}=\nu _{i}$. Finally define 
\begin{equation}
\ell_{i}(t)=(g-g_{i})\left( G_{i}^{-1}(t)\right) ,\text{ }c_{i}(t)=(p_{i}\nu
-\nu _{i})\left( G_{i}^{-1}(t)\right) ,\text{ }0\leq t\leq 1. \label{def0}
\end{equation}
\noindent We are now able to briefly describe the approximation of \cite%
{slrep}\ : if $G_{0}$ fulfills (HD1), ..., (HD6), then as $n\rightarrow
+\infty ,$ we have%
\begin{equation*}
\sqrt{n}(J_{n}(G_{0})-J(G_{0}))=\alpha _{n}(g_{0})+\beta _{n}(\nu
_{0})+o_{P}(1),
\end{equation*}

\noindent where

\begin{equation*}
\alpha _{n}(g_{0})=\frac{1}{\sqrt{n}}\sum_{j=1}^{n}g_{0}(G_{0}(V_{j})-%
\mathbb{E}g_{0}(G_{0}(V_{j}))
\end{equation*}%

\noindent is the functional empirical process and 
\begin{equation}
\beta _{n}(\nu _{0})=\frac{1}{\sqrt{n}}\sum_{j=1}^{n}\left\{
G_{n}(V_{j})-G_{0}(V_{j})\right\} \nu _{0}(V_{j})
\end{equation}%
\noindent is a \textit{residual} stochastic process introduced in \cite{slrep} and widely studied in \cite{losto}, where $G_{n}$ is the empirical
distribution function associated with $\{V_{1},...,V_{n}\}$ sampled from $%
G_{0}.$

\bigskip

\noindent Finally, we introduce these constants of whom the variances of our
theorem are based on :

\begin{equation*}
A_{1}=\sum_{i=1}^{K}p_{i}\left\{ \int_{0}^{G_{i}(Z)}(\overline{g}-\overline{g%
}_{i})^{2}(G_{i}^{-1}(t))dt-\left( \int_{0}^{G_{i}(Z)}(\overline{g}-%
\overline{g}_{i})(G_{i}^{-1}(t))dt\right) ^{2}\right\} ,
\end{equation*}%
\begin{equation*}
A_{2}=\sum_{i}^{K}p_{i}\int_{0}^{G_{i}(Z)}\int_{0}^{G_{i}(Z)}(s\wedge
t-st)(p_{i}\overline{\nu }-\overline{\nu }_{i})(G_{i}^{-1}(s))(p_{i}%
\overline{\nu }-\overline{\nu }_{i})(G_{i}^{-1}(t))dsdt,
\end{equation*}%

$$
A_{31} = \sum_{i=1}^{K}p_{i}\sum_{h\neq i}^{K}p_{h}^{2} 
\int_{0}^{G_{i}(Z)}\int_{0}^{G_{i}(Z)} \left[ {G_{h}(G_{i}^{-1}(s))\wedge
G_{h}(G_{i}^{-1}(t))} \right.
$$

$$
\left. {-G_{h}(G_{i}^{-1}(s))G_{h}(G_{i}^{-1}(t))} \right] \\
\overline{\nu }(G_{i}^{-1}(s))\overline{\nu }(G_{i}^{-1}(t))dsdt,
$$

$$
A_{32} =\sum_{i=1}^{K}p_{i}^{1/2}\sum_{j\neq
i}^{K}p_{j}^{1/2}\sum_{h\notin \{i,j\}}^{K}p_{h}^{2} 
\int_{0}^{G_{i}(Z)}\int_{0}^{G_{j}(Z)} \left[{G_{h}(G_{i}^{-1}(s))\wedge G_{h}(G_{j}^{-1}(t))} \right.
$$

$$
\left. {-G_{h}(G_{i}^{-1}(s))G_{h}(G_{j}^{-1}(t))} \right]
 \overline{\nu }(G_{i}^{-1}(s))\overline{\nu }(G_{j}^{-1}(t))dsdt,
$$

$$
B_{1} =\sum_{i=1}^{K}p_{i} \int_{0}^{G_{i}(Z)} \left\{ {\int_{0}^{s\wedge G_{i}(Z)}(\overline{g}-
\overline{g}_{i})(G_{i}^{-1}(t))dt} \right.
$$

$$
\left. {-s\int_{0}^{1}(\overline{g}-\overline{g}%
_{i})(G_{i}^{-1}(t))dt} \right\} (p_{i}\overline{\nu }-\overline{\nu }_{i})(G_{i}^{-1}(s))ds,
$$

$$
B_{2} = \sum_{i=1}^{K}p_{i}^{3/2}\sum_{j\neq i}^{K}p_{j}^{1/2} \int_{0}^{G_{i}(Z)}\int_{0}^{G_{j}(Z)}[s\wedge
G_{i}(G_{j}^{-1}(t))-sG_{i}(G_{j}^{-1}(t))] 
$$
$$
\times (p_{i}\overline{\nu }-\overline{\nu }_{i})(G_{i}^{-1}(s))\overline{%
\nu }(G_{j}^{-1}(t))dsdt,
$$

\noindent and\\

$$
B_{3} =\sum_{i=1}^{K}p_{i}^{3/2}\sum_{j\neq i}^{K}p_{j}^{1/2} \int_{0}^{G_{j}(Z)} \\
\left\{ {\int_{0}^{G_{i}(G_{j}^{-1}(s))\wedge G_{i}(Z)}(\overline{g}-%
\overline{g}_{i})(G_{i}^{-1}(t))dt} \right.
$$

$$
\left. {-G_{i}(G_{j}^{-1}(s))\times \int_{0}^{1}(%
\overline{g}-\overline{g}_{i})(G_{i}^{-1}(t))dt} \right\} 
\overline{\nu }(G_{j}^{-1}(s))ds,
$$

\noindent where%
\begin{equation*}
g_{0}(\cdot )=\overline{g}_{0}(\cdot )\times e(\cdot )\text{ and } \nu _{0}(\cdot )=\overline{\nu }_{0}(\cdot )\times e(\cdot ),
\end{equation*}%
\noindent and

\begin{equation*}
\left( g_{0},\nu _{0}\right) \in \left( g,g_{1},...,g_{K}\right) \times
\left( \nu ,\nu _{1},...,\nu _{K}\right) \text{ and }i=1,...,K.
\end{equation*}%
\bigskip

\noindent We are now able to state our main result. \bigskip

\begin{theorem}
\label{theo1} Let (HD0)-(HD6) hold. Then $gd_{n,0}^{\ast }=\sqrt{n}%
(gd_{n}-gd_{0})\leadsto \mathcal{N}(0,\vartheta _{1}^{2}+\vartheta
_{3}^{2}), $ and $gd_{n}^{\ast }=\sqrt{n}(gd_{n}-gd)\leadsto \mathcal{N}%
(0,\vartheta _{1}^{2}+\vartheta _{2}^{2})$ with
\begin{equation*}
\vartheta _{1}^{2}=A_{1}+A_{2}+A_{3}+2(B_{1}+B_{2}+B_{3})
\end{equation*}%
\begin{equation*}
\vartheta _{2}^{2}=\sum_{h=1}^{K}F_{h}{}^{2}p_{h}-\left(
\sum_{h=1}^{K}F_{h}p_{h}\right) ^{2}
\end{equation*}%
for $F_{h}=\mathbb{E}g(Y^{h})-J(G_{h})+\sum_{i=1}^{K}p_{i}\mathbb{E}%
G_{h}(Y^{i})\nu (Y^{i}),$ and%
\begin{equation*}
\vartheta _{3}^{2}=\sum_{h=1}^{K}M_{h}{}^{2}p_{h}-\left(
\sum_{h=1}^{K}M_{h}p_{h}\right) ^{2}
\end{equation*}%
for $M_{h}=\mathbb{E}g(Y^{h})+\sum_{i=1}^{K}p_{i}\mathbb{E}G_{h}(Y^{i})\nu
(Y^{i}).$
\end{theorem}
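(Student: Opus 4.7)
The plan is to apply the representation Theorem of \cite{slrep} both to the global index $J_n(G)$ and to each subgroup index $J_{n_i^*}(G_i)$, and then to assemble the pieces. First, introduce the conditional version $gd_0 := J(G) - \sum_{i=1}^K (n_i^*/n) J(G_i)$, so that one has the exact algebraic identities
\begin{equation*}
\sqrt{n}(gd_n - gd_0) = \sqrt{n}(J_n(G) - J(G)) - \sum_{i=1}^K \sqrt{n_i^*/n}\,\sqrt{n_i^*}(J_{n_i^*}(G_i) - J(G_i))
\end{equation*}
and $\sqrt{n}(gd_n - gd) = \sqrt{n}(gd_n - gd_0) + \sqrt{n}\sum_{i=1}^K (p_i - n_i^*/n) J(G_i)$. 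The extra multinomial term in the second identity is exactly what will turn the coefficient $M_h$ into $F_h = M_h - J(G_h)$, i.e.\ $\vartheta_3^2$ into $\vartheta_2^2$; so it is enough to handle the first, conditional, case carefully and then add this linear correction.

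Under (HD0)--(HD6), the representation theorem gives $\sqrt{n}(J_n(G) - J(G)) = \alpha_n(g) + \beta_n(\nu) + o_P(1)$ and, applied inside each subgroup conditionally on $n^*$, $\sqrt{n_i^*}(J_{n_i^*}(G_i) - J(G_i)) = \alpha_{n_i^*}(g_i) + \beta_{n_i^*}(\nu_i) + o_P(1)$. The next step is to decompose the \emph{global} processes $\alpha_n(g)$ and $\beta_n(\nu)$ along the partition $\{j : \pi_{i,j} = 1\}$. Because $Y_j \mid \pi_{i,j}=1 \sim G_i$, one gets
\begin{equation*}
\alpha_n(g) = \sum_{i=1}^K \sqrt{n_i^*/n}\,\alpha_{n_i^*}^{(i)}(g) + o_P(1),
\end{equation*}
where $\alpha_{n_i^*}^{(i)}(g)$ is the empirical process over the subgroup-$i$ incomes evaluated at $g$. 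A similar rewriting applies to $\beta_n(\nu)$, except that the global empirical distribution must be split as $G_n = \sum_h (n_h^*/n)G_{h,n_h^*}$, which is where the cross-subgroup kernels $G_h \circ G_i^{-1}$ appearing in $A_{31}, A_{32}, B_2, B_3$ come from.

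Inserting these expansions, the within-group contributions line up as $\sum_i \sqrt{p_i}\,\alpha_{n_i^*}^{(i)}(g - g_i) + \sum_i \sqrt{p_i}\,[\text{residual } \nu\text{--}\nu_i\text{ terms}] + o_P(1)$, which, using $\ell_i$ and $c_i$ from (\ref{def0}) and the combined functional-empirical-and-residual-process framework of \cite{losto}, has conditional variance exactly $\vartheta_1^2 = A_1 + A_2 + A_3 + 2(B_1 + B_2 + B_3)$: term by term, $A_1$ matches the $\alpha$-variance of the $\ell_i$, $A_2$ the $\beta$-variance of the $c_i$, $A_{31}, A_{32}$ the cross-group $\nu$-$\nu$ covariances coming from the $G_h \circ G_i^{-1}$ kernels, and $B_1, B_2, B_3$ the $\alpha$--$\beta$ cross-covariances. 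The between-group fluctuations are then handled by the multinomial CLT $\sqrt{n}(n_h^*/n - p_h) \leadsto $ centered multinormal; conditionally on the $Y$'s this is asymptotically independent of the within-group processes, adding the $\vartheta_3^2$ piece to the conditional gap and, after incorporating the explicit linear term in $J(G_i)$, the $\vartheta_2^2$ piece to the unconditional one. The two joint convergences plus Slutsky yield the stated laws.

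The hard part will be the middle step: rigorously extracting the cross-subgroup quantities $A_{31}, A_{32}, B_2, B_3$ by rewriting $\beta_n(\nu)$ in terms of the subgroup empirical processes. The delicacy is that $G_n - G$, evaluated on the event $\{\pi_{i,j}=1\}$, contains a contribution from \emph{every} subgroup $h$, not just $h=i$, which is what produces covariances of the form $\int\!\!\int \bigl(G_h(G_i^{-1}(s)) \wedge G_h(G_i^{-1}(t)) - G_h(G_i^{-1}(s))G_h(G_i^{-1}(t))\bigr)\,\overline{\nu}(G_i^{-1}(s))\overline{\nu}(G_i^{-1}(t))\,ds\,dt$. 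Checking that this bookkeeping is internally consistent, i.e.\ that all cross-covariances between the global and the subgroup processes combine exactly into the $A_k$ and $B_k$ above and that each conditioning-on-$n^*$ argument is uniform enough to pass to the limit, is where the bulk of the work lies.
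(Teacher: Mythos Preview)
Your strategy is the paper's: apply the Sall--Lo representation to $J_n(G)$ and each $J_{n_i^*}(G_i)$, split the global $\alpha_n$ and $\beta_n$ along the subgroups, isolate a within-group piece (giving $\vartheta_1^2$ via the $A_k,B_k$) and a multinomial piece (giving $\vartheta_2^2$ or $\vartheta_3^2$), and combine. The variance bookkeeping you sketch for $A_1,\dots,B_3$ is exactly what the paper carries out in its lemma.

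Two points need tightening. First, the displayed decomposition $\alpha_n(g)=\sum_i\sqrt{n_i^*/n}\,\alpha_{n_i^*}^{(i)}(g)+o_P(1)$ is false: the global process is centered at $\mathbb{E}g(Y)=\sum_i p_i\,\mathbb{E}g(Y^i)$ while each subgroup process is centered at $\mathbb{E}g(Y^i)$, so the remainder is $\sum_i\sqrt{n}\,(n_i^*/n-p_i)\,\mathbb{E}g(Y^i)$, which is $O_P(1)$, not $o_P(1)$. This term is precisely where the summand $\mathbb{E}g(Y^h)$ in $M_h$ comes from; the analogous leftover in the $\beta_n(\nu)$ decomposition supplies $H_h=\sum_i p_i\,\mathbb{E}G_h(Y^i)\nu(Y^i)$. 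You quote the correct $M_h$ later, so you evidently expect these pieces, but as written your expansion drops them.

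Second, the combination step is not ``asymptotic independence conditionally on the $Y$'s.'' The paper's mechanism is that the multinomial piece $D^*(n)$ is $\sigma(n^*)$-measurable, while conditionally on $n^*=\bar n$ the within-group piece $C(n)$ converges to $\mathcal N(0,\vartheta_1^2)$ with a limit not depending on $\bar n$; a characteristic-function argument (condition on $n^*$, use the conditional limit, then average over $n^*$) yields the product $\exp\bigl(-(\vartheta_1^2+\vartheta_2^2)t^2/2\bigr)$. Your Slutsky-plus-independence phrasing should be replaced by this conditioning argument.
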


\bigskip

\begin{remark}
This clearly makes the so important decomposability requirement less crucial
since the default of decomposability may be estimated by confidence
intervals based on this theorem, as we showed it in the next section.
\end{remark}

\section{Examples and Applications}

\label{sec5}

\subsection{Sen Case}

The conditions (HD1), (HD2), (HD3) and (HD4) hold for this measure and we
have here $c(x,y)=x-y$ and $\pi (x,y)=y/x.$ Further when (HD0), (HD5) and
(HD6) are true,\ the results of Theorem \ref{theo1} apply with%
\begin{equation*}
J(G_{0})=2\int_{0}^{G_{0}(Z)}\left( 1-\frac{s}{G_{0}(Z)}\right) \left( \frac{%
Z-G_{0}^{-1}(s)}{Z}\right) ds,
\end{equation*}%
\begin{equation*}
K(G_{0})=2\left( 1-\frac{1}{ZG_{0}(Z)}\int_{0}^{G_{0}(Z)}G_{0}^{-1}(s)ds%
\right) +\frac{J(G_{0})}{G_{0}(Z)},
\end{equation*}%
\begin{eqnarray*}
g_{0}(y) &=&2\left\{ \left[ \left( 1-\frac{G_{0}(y)}{G_{0}(Z)}\right) \left( 
\frac{Z-y}{Z}\right) \right. \right. \\
&&-\left. \left. \left( \frac{G_{0}(y)}{G_{0}(Z)}\right) \left( \frac{%
J(G_{0})}{G_{0}(Z)}\right) \right] +K(G_{0})\right\} 1_{(y\leq Z)},
\end{eqnarray*}%
and%
\begin{equation*}
\nu _{0}(y)=-\frac{2}{G_{0}(Z)}\left[ \left( \frac{Z-y}{Z}\right) +\frac{%
J(G_{0})}{G_{0}(Z)}\right] 1_{(y\leq Z)}.
\end{equation*}

\subsection{Shorrocks' case}

We have the same conclusion of the previous case with $c(x,y)=2(1-y),$ $%
K(G_{0})=0,$%
\begin{equation}
J(G_{0})=2\int_{0}^{G_{0}(Z)}(1-G_{0}(Z))\left( \frac{Z-G_{0}^{-1}(s)}{Z}%
\right) ds,
\end{equation}%
\begin{equation*}
g_{0}(y)=2\left( 1-G_{0}(y)\right) \left( \frac{Z-y}{Z}\right) 1_{(y\leq Z)},
\end{equation*}%
and%
\begin{equation*}
\nu _{0}(y)=-2\left( \frac{Z-y}{Z}\right) 1_{(y\leq Z)}.
\end{equation*}

\subsection{Kakwani case}

We also have the same conclusion for the Kakwami measure of parameter $k\geq 1$
with $c(x,y)=(x-y)^{k}$ and $\pi (x,y)=y^{k}/x,$

\begin{equation*}
J(G_{0})=(k+1)\int_{0}^{G_{0}(Z)}\left( 1-\frac{s}{G_{0}(Z)}\right)
^{k}\left( \frac{Z-G_{0}^{-1}(s)}{Z}\right) ds,
\end{equation*}%
\begin{eqnarray*}
K(G_{0}) &=&\frac{k(k+1)}{G_{0}(Z)}\int_{0}^{G_{0}(Z)}\left( 1-\frac{s}{%
G_{0}(Z)}\right) ^{k-1}\left( \frac{Z-G_{0}^{-1}(s)}{Z}\right) ds \\
&&+\frac{J(G_{0})}{G_{0}(Z)},
\end{eqnarray*}%
\begin{eqnarray*}
g_{0}(y) &=&\left\{ (k+1)\left[ \left( 1-\frac{G_{0}(y)}{G_{0}(Z)}\right)
^{k}\left( \frac{Z-y}{Z}\right) \right. \right. \\
&&-\left. \left. \frac{J(G_{0})}{G_{0}(Z)}\left( \frac{G_{0}(y)}{G_{0}(Z)}%
\right) ^{k}\right] +K(G_{0})\right\} 1_{(y\leq Z)},
\end{eqnarray*}%
and%
\begin{eqnarray*}
\nu _{0}(y) &=&-\frac{k(k+1)}{G_{0}(Z)}\left[ \left( 1-\frac{G_{0}(y)}{%
G_{0}(Z)}\right) ^{k-1}\left( \frac{Z-y}{Z}\right) \right. \\
&&+\left. \frac{J(G_{0})}{G_{0}(Z)}\left( \frac{G_{0}(y)}{G_{0}(Z)}\right)
^{k-1}\right] 1_{(y\leq Z)}.
\end{eqnarray*}

\subsection{Data-driven applications}

In this note, let us focus on the Sen case, which is more tricky than the
Shorrocks one. We consider the Senegalese database ESAM 1 of 1996 which
includes 3278 households. We first consider the geographical decomposition
into the areas, Dakar is the Capital. We have the Sen measure values for the whole 
 Senegal and for its ten sub-areas.

\begin{center}
\begin{tabular}{|l|c|c|c|c|c|c|}
\hline
Area & Senegal & Kolda & Dakar & Diourbel & Saint-Louis & Louga \\ \hline
Sen Index & 34.71\% & 51.66\% & 22.73\% & 40.16\% & 37.51\% & 34.53\% \\ 
\hline
Size & 3278 & 198 & 1122 & 231 & 314 & 174 \\ \hline
\end{tabular}

\begin{tabular}{|l|c|c|c|c|c|}
\hline
Area & Tambacounda & Kaolack & Thies & Fatick & Ziguinchor \\ \hline
Sen Index & 47.47\% & 37.91\% & 41.31\% & 42.22\% & 39.13\% \\ \hline
Size & 126 & 316 & 401 & 180 & 216 \\ \hline
\end{tabular}
\end{center}

\bigskip

\noindent Let us compute the different variances $\vartheta _{1}^{2}$ $\vartheta _{2}^{2}$ and $%
\vartheta _{3}^{2}$ of Theorem \ref{theo1} with the empirical estimations $p_{i}\approx n_{i}/n,$. We obtain for the geographical decomposability in Senegal : $\vartheta _{1}^{2}+\vartheta _{2}^{2}=0.093195,$ $\vartheta
_{1}^{2}+\vartheta _{3}^{2}=0.093224$ and $gd_{n}=1.25450$ $10^{-3}$ . This
gives the $95\%$-confidence :%
\begin{equation*}
dg\in \lbrack -0.00919\%,0.00117\%],
\end{equation*}

\noindent that is

\begin{equation*}
J(G)\in \lbrack 34.7\%, 34.71\% \rbrack,
\end{equation*}

\noindent We remark the very accurate estimation of the Sen index for the
whole country of Senegal which makes us tell that this index is practically
decomposable in this empirical case. We have already explained that
decomposability does not matter when the distribution is uniform in the
population. It happens that earlier works show that the senegalese date are
well fitted by the lognormal or the Singh-Maddala model for each area with
very similar parameters. Now for a decomposition with respect to the
household chief gender, we get the sen measure values.\\

\begin{center}
\begin{tabular}{|l|c|c|c|}
\hline
Gender & Senegal & Male & female \\ \hline
Sen Index & 34.7 \% & 35.27 \% & 32.62 \% \\ \hline
size & 3278 & 2559 & 919 \\ \hline
\end{tabular}
\end{center}

\bigskip 

\noindent We get here $\vartheta _{1}^{2}+\vartheta _{2}^{2}=1.87$, $%
\vartheta _{1}^{2}+\vartheta _{3}^{2}=1.78$, $gd_{n}=1.496\times 10^{-4}$%
and this $95\%$-confidence :%
\begin{equation*}
dg\in \lbrack -0.00437\%,0.0016\%],
\end{equation*}

\noindent that is

\begin{equation*}
J(G)\in \lbrack 34.696\%,34.704\%],
\end{equation*}

\noindent We get the same conclusion that the gap of decomposability is significantly very low.\\

\noindent We have for the Mauritanian data (EPCV 2004) the following geographical and gender decomposability estimates. For the whole country and its thirteen sub-areas, we have :\\

\begin{table}[htbp]
\centering
\begin{tabular}{|c|c|c|c|c|}
\hline
Area & Mauritanie & Hodh Charghy & Hodh Gharby & Guidimagha \\ \hline
Sen Index & 7,5\% & 6,73\% & 7,59\% & 10,89\% \\ \hline
Size & 9360 & 1211 & 469 & 234 \\ \hline
\end{tabular}

\begin{tabular}{|l|c|c|c|c|c|}
\hline
Area & Adrar & Nouadhibou & Tagant & Tiris Zemmour & Assaba \\ \hline
Sen Index & 5,5\% & 0,83\% & 13,34\% & 2,78\% & 6,49\% \\ \hline
Size & 568 & 585 & 490 & 284 & 514 \\ \hline
\end{tabular}

\begin{tabular}{|c|c|c|c|c|c|}
\hline
Area & Brakna & Trarza & Inchiri & Gorgol & Nouakchott \\ \hline
Sen Index & 11,57\% & 9,12\% & 4,89\% & 12,43\% & 3,49\% \\ \hline
Size & 1190 & 1217 & 205 & 796 & 1597 \\ \hline
\end{tabular}
\vspace{0.2cm}
\end{table}

\bigskip

\noindent $\vartheta _{1}^{2}+\vartheta _{2}^{2}=7,85\times 10^{-2}$, $%
\vartheta _{1}^{2}+\vartheta _{3}^{2}=7,85\times 10^{-2}$ and $gd_{n}=6,40\times
10^{-4}$. This gives the $95\%$-confidence :

\begin{equation*}
dg\in \lbrack -0.00503\%,0.00631\%]
\end{equation*}

\noindent For a stratification with respect to the gender of the chief household, we have :\\
\begin{center}
\begin{tabular}{|l|c|c|c|}
\hline
Gender & Mauritania & Male & female \\ \hline
Sen Index & 7,5 \% & 7,46 \% & 7,64 \% \\ \hline
size & 9360 & 7513 & 1847 \\ \hline
\end{tabular}
\end{center}

\bigskip

\noindent $\vartheta _{1}^{2}+\vartheta _{2}^{2}=5,58\times 10^{-2}$, $\vartheta
_{1}^{2}+\vartheta _{3}^{2}=5,58\times 10^{-2}$, $gd_{n}=3,99\times
10^{-5}$ and the $95\%$-confidence :

\begin{equation*}
dg\in \lbrack -0.004,74\%,0.00482\%],
\end{equation*}

\noindent Our general conclusion is that for all these cases, the sen measure is almost decomposable. But, this does not really matter. The important result is that we are able to have an accurate estimation of the gap of decomposability.
\section{Proofs}

\label{sec6}

\noindent To begin, we need more notations to describe the representation
result of \cite{slrep}, in an appropriate way to our proof. Let $G_{0}\in
\{G,G_{1},...,G_{K}\}$ and let a sample of incomes $\{V_{1},...,V_{m}\}$
from $G_{0}.$ Let $\alpha _{G_{0},m}$ the uniform empirical functional
process based on 
\begin{equation*}
\{G_{0}(V_{1}),...,G_{0}(V_{m})\},
\end{equation*}%
defined by%
\begin{equation*}
\alpha _{G_{0},m}(g_{0})=\frac{1}{\sqrt{m}}\sum_{j=1}^{m}g_{0}(G_{0}(V_{j})-%
\mathbb{E}g_{0}(G_{0}(V_{j})),
\end{equation*}%
and define an other empirical process, called here residual empirical
process,%
\begin{equation}
\beta _{G_{0},m}(\nu _{0})=\frac{1}{\sqrt{m}}\sum_{j=1}^{m}\left\{
G_{G_{0},m}(V_{j})-G_{0}(V_{j})\right\} \nu _{0}(V_{j}),  \label{rep}
\end{equation}%
where $G_{G_{0},m}$ is the empirical distribution function associated with $%
\{V_{1},...,V_{m}\}.$ The representation Theorem of \ Sall and Lo \cite{slrep} establishes under the hypotheses (HD0)-(HD6), for $%
J(G_{0})=H_{c}(G_{0})/H_{\pi }(G_{0})$, 
\begin{equation*}
\sqrt{m}(J_{m}(G_{0})-J(G_{0}))=\alpha _{G_{0},m}(g_{0})+\beta
_{G_{0},m}(\nu _{0})+o_{P}(1)
\end{equation*}

\noindent as $m\rightarrow \infty$, where $g_{0}$ and $\nu _{0}$ are
described in (\ref{g0}) and (\ref{nu0}).

\bigskip

\noindent Before going any further, we should precise the notations for the
global population and the subgroups. For $G=G_{0},$ we drop the subscript $%
G_{0}$ so that $\alpha _{n},$ $\beta _{n},$ $G_{n},$ $J_{n}$ are
respectively the empirical, the residual empirical process (\ref{rep}), the
empirical distribution function and the GPI based on the sample $%
Y_{1},...,Y_{n},$ and $J=J(G)=$ $H_{c}(G)/H_{\pi }(G).$ As well the
functions $g_{0}$ and $\nu _{0}$ are denoted as $g$ and $\nu $ for $G=G_{0}.$
For $G=G_{i},$ $1\leq i\leq K,$ we use the subscript $i$ so that $\alpha
_{i,n_{i}^{\ast }},$ $\beta _{i,n_{i}^{\ast }},$ $G_{i,n_{i}^{\ast }},$ $%
J_{i,n_{i}^{\ast }}$ will respectively denote the empirical, the residual
empirical process (\ref{rep}), the empirical distribution function and the
GPI based on the sample $Y_{i,1},...,Y_{i,n_{i}^{\ast }},$ and $%
J_{i}(G_{i})= $ $H_{c}(G_{i})/H_{\pi }(G_{i}),$ accordingly to the notations
of Section \ref{sec4}, and the functions $g_{0}$ and $\nu _{0}$ are denoted
as $g_{i}$ and $\nu _{i}$ in this case. But sometimes we may feel the
notations so heavy and then lessen them. For example, we only put $%
J_{i}(G_{i})=J(G_{i})$ and $J_{i,n_{i}^{\ast }}(G_{i})=J_{n_{i}^{\ast
}}(G_{i})$, $i \in \{1,..,K \}$.\\

\noindent To begin the proof, we remark that $n^{\ast }(\omega
_{1})=(n_{1}^{\ast }(\omega _{1}),...,n_{K}^{\ast }(\omega _{1}))\rightarrow
_{\mathbb{P}_{1}}\{+\infty \}^{K}$ as $n=n_{1}^{\ast }(\omega
_{1})+...+n_{K}^{\ast }(\omega _{1})\rightarrow \infty .$ We then get 
\begin{equation}
\sqrt{n}(J_{n}(G)-J(G))=\alpha _{n}(g)+\beta _{n}(\nu )+o_{P}(1):=\gamma
_{n}+o_{P}(1)  \label{l1}
\end{equation}%
and for any $1\leq i\leq K$, 
\begin{equation}
\sqrt{n_{i}^{\ast }}(J_{n_{i}^{\ast }}(G_{i})-J(G_{i}))=\alpha
_{i,n_{i}^{\ast }}(g_{i})+\beta _{i,n_{i}^{\ast }}(\nu
_{i})+o_{P}(1):=\gamma _{i,n_{i}^{\ast }}+o_{P}(1)  \label{l2}
\end{equation}

\bigskip 

\noindent Now we use the intermediate centering coefficient 
\begin{equation*}
gd_{0,n}=J(G)-\sum_{i=1}^{K}\frac{n_{i}^{\ast }}{n}J(G_{i}).
\end{equation*}%
to get from (\ref{l1}) and (\ref{l2}) 
\begin{multline}
\left\vert \sqrt{n}(gd_{n}-gd_{0,n})(\omega _{1},\omega _{2})-\left\{ \gamma
_{n}-\sum_{j=1}^{K}\left( \frac{n_{i}^{\ast }}{n}\right) ^{1/2}\gamma
_{i,n_{i}}\right\} (\omega _{1},\omega _{2})\right\vert  \label{approx_n1} \\
\rightarrow _{P_{1}\otimes P_{2}}0,
\end{multline}%
as $n\rightarrow \infty $. Then, we have
\begin{eqnarray*}
S_{n}^{\ast } &=&\gamma _{n}(g,\nu )-\sum_{j=1}^{K}\left( \frac{n_{i}^{\ast }%
}{n}\right) ^{1/2}\gamma _{i,n_{i}^{\ast }}(g_{i},\nu _{i}) \\
&=&\alpha _{n}(g)-\sum_{j=1}^{K}\left( \frac{n_{i}^{\ast }}{n}\right)
^{1/2}\alpha _{i,n_{i}^{\ast }}(g_{i})+\beta _{n}(\nu )-\sum_{j=1}^{K}\left( 
\frac{n_{i}^{\ast }}{n}\right) ^{1/2}\beta _{i,n_{i}^{\ast }}(\nu _{i}).
\end{eqnarray*}%
Remark that%
\begin{equation*}
\alpha _{n}(g)=\frac{1}{\sqrt{n}}\sum_{j=1}^{n}\left( g(Y_{j})-\mathbb{E}%
g(Y)\right) =\sqrt{n}\left( \frac{1}{n}\sum_{j=1}^{n}g(Y_{j})-\mathbb{E}%
g(Y)\right)
\end{equation*}%
\begin{equation*}
=:\sqrt{n}\left( \frac{1}{n}\sum_{j=1}^{n}g(Y_{j})-\sum_{i=1}^{K}\frac{%
n_{i}^{\ast }}{n}\mathbb{E}g(Y^{i})\right) +D^{\ast }(n,1)
\end{equation*}%
with

$$
D(n,1)=\sum_{i=1}^{K}\frac{n_{i}-np_{i}}{\sqrt{np_{i}}}\sqrt{p_{i}}\mathbb{E}g(Y^{i}),
$$

\noindent and\\

$$
D^{\ast }(n,1)=\sum_{i=1}^{K}\frac{n_{i}^{\ast }-np_{i}}{\sqrt{np_{i}}}%
\mathbb{E}g(Y^{i})\sqrt{p_{i}}.
$$

\noindent This leads to

\begin{eqnarray*}
S_{n}^{\ast } &=&\sqrt{n}\left( \frac{1}{n}\sum_{j=1}^{n}g(Y_{j})-%
\sum_{i=1}^{K}\frac{n_{i}^{\ast }}{n}\mathbb{E}g(Y^{i})\right)
-\sum_{j=1}^{K}\left( \frac{n_{i}^{\ast }}{n}\right) ^{1/2}\alpha
_{i,n_{i}^{\ast }}(g_{i}) \\
&&+\beta _{n}(\nu )-\sum_{j=1}^{K}\left( \frac{n_{i}^{\ast }}{n}\right)
^{1/2}\beta _{i,n_{i}^{\ast }}(\nu _{i})+D^{\ast }(n,1).
\end{eqnarray*}%

\noindent Now, by denoting 
\begin{equation*}
C^{\ast }(n,1)=\sqrt{n}\left( \frac{1}{n}\sum_{j=1}^{n}g(Y_{j})-%
\sum_{i=1}^{K}\frac{n_{i}^{\ast }}{n}\mathbb{E}g(Y^{i})\right)
-\sum_{i=1}^{K}\left( \frac{n_{i}^{\ast }}{n}\right) ^{1/2}\alpha
_{i,n_{i}^{\ast }}\left( g_{i}\right),
\end{equation*}%

\noindent one has
\begin{equation}
C^{\ast }(n,1)=\sum_{i=1}^{K}\left( \frac{n_{i}^{\ast }}{n}\right) ^{1/2}%
\frac{1}{\sqrt{n_{i}^{\ast }}}\sum_{j=1}^{n_{i}^{\ast}}\left[ \left( g-g_{i}\right)
\left( Y_{i,j}\right) -\mathbb{E}\left( g-g_{i}\right) (Y^{i})\right] .
\label{c1}
\end{equation}%
we get%
\begin{equation}
S_{n}^{\ast }=C^{\ast }(n,1)+D^{\ast }(n,1)+\beta _{n}(\nu
)-\sum_{j=1}^{K}\left( \frac{n_{i}^{\ast }}{n}\right) ^{1/2}\beta
_{i,n_{i}^{\ast }}(\nu _{i}).  \label{for1a}
\end{equation}

\noindent Further one has
\begin{equation}
\sum_{j=1}^{K}\left( \frac{n_{i}^{\ast }}{n}\right) \beta _{i,n_{i}}^{\ast
}(\nu _{i})=\frac{1}{\sqrt{n}}\sum_{i=1}^{K}\sum_{j=1}^{n_{i}^{\ast
}}[G_{i,n_{i}^{\ast }}(Y_{ij})-G_{i}(Y_{ij}))]\nu _{i}(Y_{ij})  \label{beta1}
\end{equation}%
But%
\begin{equation*}
G(Y_{ij})=\sum_{h=1}^{K}p_{h}G_{h}(Y_{ij}),
\end{equation*}%
and for $x\in \mathbb{R},$%
\begin{equation*}
G_{n}(x)=\frac{1}{n}\sum_{i=1}^{n}1_{(Y_{j}\leq x)}=\frac{1}{n}%
\sum_{i=1}^{K}\sum_{j=1}^{n_{i}^{\ast }}1_{(Y_{i,j}\leq x)}
\end{equation*}%
\begin{equation*}
=\sum_{i=1}^{K}(\frac{n_{i}^{\ast }}{n})\frac{1}{n_{i}^{\ast }}%
\sum_{j=1}^{n_{i}^{\ast }}1_{(Y_{i,j}\leq x)}=\sum_{i=1}^{K}\frac{%
n_{i}^{\ast }}{n}G_{i,n_{i}^{\ast }}(x).
\end{equation*}%
Thus 
\begin{equation*}
\beta _{n}(\nu )=\frac{1}{\sqrt{n}}\sum_{i=1}^{K}\sum_{j=1}^{n_{i}^{\ast }}%
\left[ \sum_{h=1}^{K}(\frac{n_{h}^{\ast }}{n})G_{i,n_{h}^{\ast
}}(Y_{ij})-p_{h}G_{h}(Y_{ij})\right] \nu (Y_{ij}).
\end{equation*}%
From this, we put and subtract $\sum_{h=1}^{k}(\frac{n_{h}^{\ast }}{n}%
)G_{h}(Y_{ij})$ to have 
\begin{eqnarray*}
\beta _{n}(\nu ) &=&\frac{1}{\sqrt{n}}\sum_{i=1}^{K}\sum_{j=1}^{n_{i}^{\ast
}}\left[ \sum_{h=1}^{K}\left( \frac{n_{h}^{\ast }}{n}\right)
G_{i,n_{h}^{\ast }}(Y_{ij})-\sum_{h=1}^{K}\left( \frac{n_{h}^{\ast }}{n}%
\right) G_{h}(Y_{ij})\right] \nu (Y_{ij}) \\
&&+\frac{1}{\sqrt{n}}\sum_{i=1}^{K}\sum_{j=1}^{n_{i}^{\ast }}\left[
\sum_{h=1}^{K}\left( \frac{n_{h}^{\ast }}{n}-p_{h}\right) G_{h}(Y_{ij})%
\right] \nu (Y_{ij})
\end{eqnarray*}

\begin{equation}
=\frac{1}{\sqrt{n}}\sum_{i=1}^{K}\sum_{j=1}^{n_{i}}\sum_{h=1}^{K}\left( 
\frac{n_{h}^{\ast }}{n}\right) \left[ G_{n_{h}}(Y_{ij})-G_{h}(Y_{ij})\right]
\nu (Y_{ij})  \label{beta2}
\end{equation}

\begin{equation*}
+\frac{1}{\sqrt{n}}\sum_{i=1}^{K}\sum_{j=1}^{n_{i}}\left[ \sum_{h=1}^{K}%
\left( \frac{n_{h}^{\ast }}{n}-p_{h}\right) G_{h}(Y_{ij})\right] \nu
(Y_{ij}).
\end{equation*}%
Now we put together (\ref{beta1}) and (\ref{beta2}), while separating the
two cases $h=i$ and $h\neq i$ in (\ref{beta2}) to get%
\begin{gather*}
\beta _{n}(\nu )-\sum_{j=1}^{K}\left( \frac{n_{i}^{\ast }}{n}\right)
^{1/2}\beta _{i,n_{i}}(\nu _{i})= \\
\sum_{i=1}^{K}\left( \frac{n_{i}^{\ast }}{n}\right) ^{1/2}\left\{ \frac{1}{%
\sqrt{n_{i}^{\ast }}}\sum_{j=1}^{n_{i}}\left\{ G_{i,n_{i}^{\ast
}}(Y_{ij})-G_{i}(Y_{ij})\right\} \left( \frac{n_{i}^{\ast }}{n}\nu -\nu
_{i}\right) (Y_{ij})\right\} \\
+\sum_{i=1}^{K}\left( \frac{n_{i}^{\ast }}{n}\right) ^{1/2}\sum_{h\neq i}^{K}%
\frac{n_{h}^{\ast }}{n}\frac{1}{\sqrt{n_{i}^{\ast }}}\sum_{j=1}^{n_{i}^{\ast
}}\left[ G_{n_{h}^{\ast }}(Y_{ij})-G_{h}(Y_{ij})\right] \nu (Y_{ij}) \\
+\frac{1}{\sqrt{n}}\sum_{i=1}^{K}\sum_{j=1}^{n_{i}^{\ast }}\left[
\sum_{h=1}^{K}\left( \frac{n_{h}^{\ast }}{n}-p_{h}\right) G_{h}(Y_{ij})%
\right] \nu (Y_{ij})
\end{gather*}

\begin{equation}
=:C^{\ast }(n,2)+C^{\ast }(n,3)+D^{\ast }(n,2),  \label{for1b}
\end{equation}%
with

\begin{equation}
C^{\ast }(n,2)=\sum_{i=1}^{K}\left( \frac{n_{i}^{\ast }}{n}\right)
^{1/2}\left\{ \frac{1}{\sqrt{n_{i}^{\ast }}}\sum_{j=1}^{n_{i}}\left\{
G_{i,n_{i}^{\ast }}(Y_{ij})-G_{i}(Y_{ij})\right\} \left( \frac{n_{i}^{\ast }%
}{n}\nu -\nu _{i}\right) (Y_{ij})\right\} ,  \label{c2}
\end{equation}%
\noindent and 
\begin{equation}
C^{\ast }(n,3)=\sum_{i=1}^{K}\left( \frac{n_{i}^{\ast }}{n}\right)
^{1/2}\sum_{h\neq i}^{K}\frac{n_{h}^{\ast }}{n}\frac{1}{\sqrt{n_{i}^{\ast }}}%
\sum_{j=1}^{n_{i}^{\ast }}\left[ G_{n_{h}^{\ast }}(Y_{ij})-G_{h}(Y_{ij})%
\right] \nu (Y_{ij}).  \label{c3}
\end{equation}

\noindent We arrive, by comparing (\ref{for1a}) and (\ref{for1b}), at%
\begin{equation}
S_{n}^{\ast }=C^{\ast }(n,1)+C^{\ast }(n,2)+C^{\ast }(n,3)+D^{\ast
}(n,1)+D^{\ast \ast}(n,2).  \label{sn}
\end{equation}%
Let us have a look at 
\begin{equation*}
D^{\ast \ast }(n,2)=\sqrt{n}\sum_{h=1}^{K}\left( \frac{n_{h}^{\ast }}{n}%
-p_{h}\right) \left\{ \sum_{i=1}^{K}\left( \frac{n_{i}^{\ast }}{n}\right) 
\frac{1}{n_{i}^{\ast }}\sum_{j=1}^{n_{i}^{\ast }}G_{h}(Y_{ij})\nu
(Y_{ij})\right\} .
\end{equation*}%
By the weak law of large numbers%
\begin{equation*}
\left\{ \sum_{i=1}^{K}\left( \frac{n_{i}^{\ast }}{n}\right) \frac{1}{%
n_{i}^{\ast }}\sum_{j=1}^{n_{i}^{\ast }}G_{h}(Y_{ij})\nu (Y_{ij})\right\}
\rightarrow _{\mathbb{P}}\sum_{i=1}^{K}p_{i}\mathbb{E}G_{h}(Y^{i})\nu
(Y^{i})=H_{h}.
\end{equation*}%
That is%
\begin{equation*}
D^{\ast \ast}(n,2)=\sum_{h=1}^{K}\left( \frac{n_{h}^{\ast }-np_{h}}{\sqrt{np_{h}}%
}\right) H_{h}\sqrt{p_{h}}+o_{P}(1).
\end{equation*}

$$
=: D^{\ast}(n,2) + o_{P}(1).
$$

\noindent Finally

\begin{equation}
gd_{n}^{\ast }=S_{n}^{\ast }+\sqrt{n}(gd_{0,n}-gd)  \label{sntosnt}.
\end{equation}

\noindent Hence

\begin{equation*}
gd_{n}^{\ast }=C^{\ast }(n,1)+C^{\ast }(n,2)+C^{\ast }(n,3)
\end{equation*}

\begin{equation*}
+D^{\ast }(n,1)+D^{\ast }(n,2)-\sum_{i=1}^{K}\left( \frac{n_{i}^{\ast
}-np_{i}}{\sqrt{np_{i}}}\right) J_{i}(G_{i})\sqrt{p_{i}}+o_{P}(1),
\end{equation*}%
\begin{equation}
=:C^{\ast }(n)+D^{\ast }(n)+o_{P}(1). \label{reste}
\end{equation}%
with%
\begin{equation}
C^{\ast }(n)=C^{\ast }(n,1)+C^{\ast }(n,2)+C^{\ast }(n,3)  \label{c}
\end{equation}%
and%
\begin{equation*}
D^{\ast }(n)=D^{\ast }(n,1)+D^{\ast }(n,2)-\sum_{i=1}^{K}\left( \frac{%
n_{i}^{\ast }-np_{i}}{\sqrt{np_{i}}}\right) J_{i}(G_{i})\sqrt{p_{i}}
\end{equation*}%
\begin{equation*}
=\sum_{i=1}^{K}\left( \frac{n_{i}^{\ast }-np_{i}}{\sqrt{np_{i}}}%
\right) (H_{i}+\mathbb{E}g(Y^{i})-J_{i}(G_{i}))\sqrt{p_{i}}
\end{equation*}

\begin{equation*}
=:\sum_{i=1}^{K}\left( \frac{n_{i}^{\ast }-np_{i}}{\sqrt{np_{i}}}%
\right) F_{i}\sqrt{p_{i}}.
\end{equation*}

\bigskip

\noindent \textit{We have now to prove that }$gd_{n}^{\ast }=\sqrt{n}%
(gd_{n}-gd)$\textit{\ weakly converges to a }$\mathcal{N}(0,\vartheta
_{1}^{2}+\vartheta _{2}^{2})$ random variable. For this it suffices, based on  \ref{reste}, to prove that
$S_{n}^{\ast \ast}=C^{\ast }(n)+D^{\ast }(n)$ converges to $\mathcal{N}(0,\vartheta
_{1}^{2}+\vartheta _{2}^{2})$. Now put
\begin{equation*}
\mathbb{N}(K)=\{\overline{n}=(n_{1},...n_{K}),n_{i}\geq
0,n_{1}+...,n_{K}=n\}.
\end{equation*}%
Since $n^{\ast }=(n_{1}^{\ast },...n_{K}^{\ast })\rightarrow
_{P_{1}}\{\infty \}^{K},$ we find for a fixed $\varepsilon >0$, $K$ positive
numbers $N_{i}$ $(1\leq i\leq K)$ such that for $n_{i}\geq N_{i}$ $(1\leq
i\leq K),$ which implies that $n\geq N=N_{1}+...+N_{K},$%
\begin{equation*}
\mathbb{P}(\exists (1\leq i\leq K),n_{i}^{\ast }<N_{i})<\varepsilon .
\end{equation*}%
Let%
\begin{equation*}
\mathcal{N}(K,1)=\mathbb{N}(K)\cap \{\overline{n}=(n_{1},...n_{K}),\exists
(1\leq i\leq K),n_{i}<N_{i}\}
\end{equation*}%
and $\mathbb{N}(K,2)=\mathbb{N}(K)\diagdown \mathbb{N}(K,1).$ We remark that
conditionally on $(n^{\ast }=\overline{n})$, $C^{\ast }(n)$ becomes $C(n),$
does not depend on $\omega _{1}$ and only include the independent random
variables $\{Y_{i,j},1\leq j\leq n_{i},1\leq i\leq K\}$. From Lemma \ref%
{lem2} below, we have 
\begin{equation*}
C(n)\rightarrow \mathcal{N}(0,\vartheta _{1}^{2}).
\end{equation*}%
Also conditionally on $(n^{\ast }=\overline{n})$, $D^{\ast }(n)$ becomes
$D^(n)$ and we denote it $D(n)$. Now for $h^{2}=-1,$%
$$
\psi _{S_{n}^{\ast \ast}}(t) =\mathbb{E}(\exp (htS_{n}^{\ast \ast}))
$$

$$
=\sum_{\overline{n}\in \mathcal{N}(K)}P(n^{\ast }=\overline{n})\mathbb{E}%
(\exp (htC^{\ast }(n)+htD^{\ast }(n)) \diagup (n^{\ast }=\overline{n}))
$$

\begin{equation*}
=\sum_{\overline{n}\in \mathcal{N}(K)}P(n^{\ast }=\overline{n})\mathbb{E}%
(\exp (htD(n))\text{ }\mathbb{E}(\exp (htC^{\ast }(n))\diagup (n^{\ast }=%
\overline{n})).
\end{equation*}%
Recall that, by the classical limiting law of the multinomial $K$-vector, 
\begin{equation*}
D^{\star }(n)\rightarrow D=\sum_{i=1}^{K}Z_{i}F_{i}\sqrt{p_{i}},
\end{equation*}%
where $(Z_{1},...,Z_{K})^{t}$ is a Gaussian vector with $Var(Z_{i})=1-p_{i}$
and $Cov(Z_{i},Z_{j})=-\sqrt{p_{i}p_{j}},$ for $i\neq j.$ Then 
\begin{equation*}
D^{\ast}(n) \rightarrow \mathcal{N}(0,\vartheta _{2}^{2}),
\end{equation*}%
with 
\begin{equation*}
\vartheta _{2}^{2}=\sum_{h=1}^{K}F_{h}^{2}p_{h}(1-p_{h})-\sum_{1\leq h\neq
k\leq K}F_{h}F_{k}p_{h}p_{k}
\end{equation*}%
\begin{equation*}
=\sum_{h=1}^{K}F_{h}{}^{2}p_{h}-\left( \sum_{h=1}^{K}F_{h}p_{h}\right) ^{2}.
\end{equation*}%
We remark that this is the variance of the function $F_{h}$ of $h\in \lbrack
1,K]$ with respect to the probability measure $\sum_{1\leq h\leq
K}p_{h}\delta _{h}$.

\noindent Put now 
\begin{equation*}
\mathbb{N}(K,1)=\mathbb{N}(K)\cap \{\overline{n}=(n_{1},...n_{K}),\exists
(1\leq i\leq K),n_{i}<N_{i}\}
\end{equation*}%
and $\mathbb{N}(K,2)=\mathbb{N}(K)\diagdown \mathbb{N}(K,1)$. Then 
\begin{equation*}
\sum_{\overline{n}\in \mathbb{N}(K)}\exp (htD(n))\mathbb{P}(n^{\ast }=%
\overline{n})\mathbb{E}(\exp (htC(n))))=B(n,1)+B(n,2)
\end{equation*}%
with 
\begin{equation*}
\left\vert B(n,1)\right\vert =\left\vert \sum_{\overline{n}\in \mathbb{N}%
(K,1)}\exp (htD(n))\mathbb{P}(n^{\ast }=\overline{n})\mathbb{E}(\exp
(htC(n)))\right\vert
\end{equation*}%
\begin{equation}
\leq \mathbb{P}(\exists (1\leq i\leq K),n_{i}^{\ast }<N_{i})\rightarrow 0,
\label{approx2}
\end{equation}%
and 
\begin{equation}
\left\vert B(n,2)-\sum_{\overline{n}\in \mathbb{N}(K,2)}\exp (-(\vartheta
_{1}t)^{2}/2)\exp (htD(n))P(n^{\ast }=\overline{n})\right\vert
\label{approx3}
\end{equation}%
\begin{equation*}
\leq \varepsilon \sum_{\overline{n}\in \mathbb{N}(K,2)}P(n^{\ast }=\overline{%
n})\leq \varepsilon .
\end{equation*}%
Finally, for 
\begin{equation}
B^{\ast }(n,2)=\sum_{\overline{n}\in \mathbb{N}(K,2)}\exp (-(\vartheta
_{1}t)^{2}/2)\exp (htD(n))P(n^{\ast }=\overline{n}),  \label{approx4}
\end{equation}%
we are able to use (\ref{approx4}) and to get 
\begin{multline}
\lim \sup_{n\rightarrow \infty }\left\vert B^{\ast }(n,2)-\sum_{\overline{n}%
\in \mathbb{N}(K)}\exp (htD(n))P(n^{\ast }=\overline{n})\mathbb{E}(\exp
(-(\vartheta _{1}t)^{2}/2))\right\vert \\
=0.  \notag
\end{multline}%
But 
\begin{equation}
\mathbb{E}\exp (thD^{\ast }(n))=\sum_{\overline{n}\in \mathbb{N}(K)}\exp
(htD^{\ast }(n)/(n^{\ast }=\overline{n}))P(n^{\ast }=\overline{n})
\end{equation}%
\begin{equation*}
=\sum_{\overline{n}\in \mathbb{N}(K)}\exp (htD(n))P(n^{\ast }=\overline{n}%
)\rightarrow \exp (-(\vartheta _{2}t)^{2}/2))
\end{equation*}%
By putting together the previous formulas, and by letting $\varepsilon
\downarrow 0,$ we arrive at 
\begin{equation*}
\psi _{d_{n}^{\ast \ast}}(t)\rightarrow \exp (-(\vartheta _{1}^{2}+\vartheta
_{2}^{2})t^{2}/2).
\end{equation*}

\noindent This proves the asymptotic normality of $dg_{n}^{\ast }$ of the
theorem corresponding to $S_{n}^{\ast \ast}$. That of $dg_{n,0}^{\ast }$
corresponds to $S_{n}^{\ast }$. This latter is achieved by omitting the term 
$\sqrt{n}\sum_{i=1}^{K}(\frac{n_{i}^{\ast }}{n}-p_{i})J_{i}(G_{i})$ in (\ref%
{sntosnt}). This leads to $M_{h}$ obtained from $F_{h}$ by dropping $%
J_{i}(G_{i})$. This completes the proofs.\\

\noindent We now prove this lemma used in the proof.

\begin{lemma}
\label{lem2} Let $C(n)=C(n,1)+C(n,2)+C(n,3)$, where the $C(n,i)$ are
respectively defined in (\ref{c1}), (\ref{c2}) and (\ref{c3}) for $i=1,2,3$.
Then, as $n\rightarrow +\infty ,$%
\begin{equation*}
C(n)\leadsto \mathcal{N}(0,\vartheta _{1}^{2}).
\end{equation*}
\end{lemma}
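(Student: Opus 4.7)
The strategy is to argue conditionally on $n^{\ast}=\overline{n}$. On the (asymptotically almost sure) event $\{n_{i}^{\ast}\geq N_{i},\ 1\leq i\leq K\}$, the observations $\{Y_{i,j}\colon 1\leq j\leq n_{i},\ 1\leq i\leq K\}$ are independent random variables, mutually independent across groups, with $Y_{i,j}\sim G_{i}$; and ratios $n_{i}/n$ may be replaced by $p_{i}$ up to $o_{P}(1)$ wherever convenient. The proof then reduces to establishing joint asymptotic normality of the triple $(C(n,1),C(n,2),C(n,3))$ and computing the variance of their sum.

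I would treat $C(n,1)$ directly via the multivariate CLT applied to the $K$ independent blocks, giving $C(n,1)\leadsto \mathcal{N}(0,A_{1})$ after the change of variable $t=G_{i}(y)$ on the support $\{y\leq Z\}$, noting the definition of $\ell_{i}$ in (\ref{def0}). For $C(n,2)$, the tool is weak convergence of the uniform empirical processes $\alpha_{i,n_{i}}(t)=\sqrt{n_{i}}(G_{i,n_{i}}(G_{i}^{-1}(t))-t)$ to independent Brownian bridges $B_{i}$ on $[0,1]$. Rewriting the inner residual sum as
\[
\int \alpha_{i,n_{i}}(G_{i}(y))\,(p_{i}\nu-\nu_{i})(y)\,dG_{i,n_{i}}(y)+o_{P}(1)=\int_{0}^{G_{i}(Z)}\alpha_{i,n_{i}}(t)\,c_{i}(t)\,dt+o_{P}(1),
\]
with $c_{i}$ as in (\ref{def0}), the continuous mapping theorem yields a centered Gaussian limit whose variance is exactly $A_{2}$. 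For $C(n,3)$, the crucial feature is that, for $h\neq i$, the empirical function $G_{n_{h}^{\ast}}$ is built from observations independent of $\{Y_{i,j}\}$. Approximating $\sqrt{n_{h}^{\ast}}(G_{n_{h}^{\ast}}(y)-G_{h}(y))$ by $B_{h}(G_{h}(y))$ and integrating against the empirical measure of the $Y_{i,j}$'s converts $C(n,3)$ into a bilinear Gaussian functional, whose variance splits into the diagonal piece $A_{31}$ (when one squares using the same index $h$) and the off-diagonal piece $A_{32}$ (different bridges $B_{h}$, $B_{h'}$ with $h\neq h'$).

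Joint asymptotic normality of $(C(n,1),C(n,2),C(n,3))$ then follows from the fact that each component is a continuous functional of the jointly Gaussian vector $(\alpha_{1,n_{1}},\dots,\alpha_{K,n_{K}})$ and the centered iid sums of $C(n,1)$. The total asymptotic variance of $C(n)$ is
\[
A_{1}+A_{2}+A_{3}+2\bigl(\mathrm{Cov}(C(n,1),C(n,2))+\mathrm{Cov}(C(n,1),C(n,3))+\mathrm{Cov}(C(n,2),C(n,3))\bigr),
\]
and these three covariances are computed by Fubini together with independence of bridges from different groups, which collapses most cross-terms. A direct inspection matches them with $B_{1}$, $B_{3}$, and $B_{2}$ respectively, delivering the desired variance $\vartheta_{1}^{2}$.

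The main obstacle will be the detailed handling of $C(n,3)$ and its covariances. The quantity $G_{n_{h}^{\ast}}(Y_{i,j})-G_{h}(Y_{i,j})$ couples two independent samples, so one must justify the Gaussian approximation uniformly in $j$ (invoking the machinery of the residual-process representation of \cite{slrep} and \cite{losto}), and then carry out a delicate case analysis when evaluating covariances of terms indexed by pairs $(i,h)$ and $(i',h')$, distinguishing which of the indices coincide. This bookkeeping is precisely what produces the split $A_{3}=A_{31}+A_{32}$ and the weighted forms of $B_{2}$ and $B_{3}$ with the factors $p_{i}^{3/2}p_{j}^{1/2}$ arising from the two overlapping group weights $(n_{i}^{\ast}/n)^{1/2}(n_{h}^{\ast}/n)$.
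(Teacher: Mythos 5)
Your proposal follows essentially the same route as the paper: condition on $n^{\ast}=\overline{n}$, represent each $C(n,i)$ as a functional of the group-wise (independent) empirical processes converging to Brownian bridges, and assemble the variance as $A_{1}+A_{2}+A_{3}+2(B_{1}+B_{2}+B_{3})$, with the same matching of covariances to $B_{1}$, $B_{3}$, $B_{2}$ and the same split $A_{3}=A_{31}+A_{32}$. The only detail worth noting is that the paper resolves the uniformity issue you flag for $C(n,3)$ via the oscillation modulus of the uniform empirical process (Shorack and Wellner, p.~531) rather than the residual-process machinery, but this is a minor variation within the same argument.
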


\begin{proof}
\bigskip Recall that%
\begin{equation}
C(n)=C(n,1)+C(n,2)+C(n,3).  \label{ca}
\end{equation}

\noindent Let for each $i\in \lbrack 1,K],$ $\mathbb{G}_{n_{i}}(i,f)$ be the
functional empirical process based on $\{G_{i}(Y_{i,j}),1\leq i\leq
n_{i}\},1\leq i\leq K\}.$ We consider the three terms in (\ref{ca}), $\ $%
that is the $C(n,i)$, $1\leq i\leq 3,$ defined in (\ref{c1}), (\ref{c2}) and
in (\ref{c3}), and prove that each of them converges to a random variable $%
C(i)$ depending on the limiting Gaussian processes $\mathbb{G(}i,\cdot )$ of 
$\mathbb{G}_{n_{i}}(i,\cdot )$. This is enough to prove the asymptotic
normality. The variance $\vartheta _{1}^{2}$ will be nothing else but that
of $C(1)+C(2)+C(3)$. Firstly, we treat $C(n,1).$ Remark that conditionally
on $(n^{\ast }=\overline{n}),$ the random sequences $\{Y_{i,j},1\leq i\leq
n_{i},1\leq i\leq K\}$ are independent and only depend on the $\omega
_{2}\in \Omega _{2}.$ We have%
\begin{equation*}
\sum_{i=1}^{K}\left( \frac{n_{i}}{n}\right) ^{1/2}\alpha _{n_{i}}(g_{i})=%
\frac{1}{\sqrt{n}}\left[ \sum_{i=1}^{K}\sum_{j=1}^{n_{i}}g_{i}(Y_{ij})-%
\sum_{i=1}^{K}n_{i}\mathbb{E}(g_{i}(Y^{i}))\right]
\end{equation*}%
\begin{equation*}
=\sqrt{n}\left[ \frac{1}{n}\sum_{i=1}^{K}\sum_{j=1}^{n_{i}}g_{i}(Y_{ij})-%
\sum_{i=1}^{K}\left( \frac{n_{i}}{n}\right) \mathbb{E}\left(
g_{i}(Y^{i})\right) \right] ,
\end{equation*}%
and%
\begin{equation*}
\alpha _{n}(g,1)=\sqrt{n}\left( \frac{1}{n}\sum_{j=1}^{n}g(Y_{j})-%
\sum_{i=1}^{K}\left( \frac{n_{i}}{n}\right) \mathbb{E}\left( g(Y^{i})\right)
\right)
\end{equation*}%
\begin{equation*}
=\sqrt{n}\left( \frac{1}{n}\sum_{i=1}^{K}\sum_{j=1}^{n_{i}}g(Y_{ij})-%
\sum_{i=1}^{K}\left( \frac{n_{i}}{n}\right) \mathbb{E}\left( g(Y^{i})\right)
\right) .
\end{equation*}%
Then, by (\ref{c1}) and replacing $n_{i}^{\ast}$ by $n_{i}$, $i=1,...,K$, we get 

\begin{equation*}
C(n,1)=\alpha _{n}(g,1)-\sum_{i=1}^{K}\left( \frac{n_{i}}{n}\right) \alpha
_{n_{i}}(g_{i})
\end{equation*}%

\begin{equation}
=\sum_{i=1}^{K}\left( \frac{n_{i}}{n}\right) ^{1/2}\left\{ \frac{1}{\sqrt{%
n_{i}}}\sum_{j=1}^{n_{i}}\left\{ \left( g-g_{i}\right) (Y_{ij})-\mathbb{E}%
\left( g-g_{i}\right) (Y^{i}))\right\} \right\} .  \label{c1a}
\end{equation}%

\noindent This implies that%

\begin{equation*}
C(n,1)=\sum_{i=1}^{K}\left( \frac{n_{i}}{n}\right) ^{1/2}\mathbb{G}%
_{n_{i}}\left( i,\left( g-g_{i}\right) G_{i}^{-1}\right) .
\end{equation*}%
We finally have that%
\begin{equation*}
C(n,1)\rightarrow C(1)=\sum_{i=1}^{K}p_{i}^{1/2}\mathbb{G}%
(i,(g-g_{i})G_{i}^{-1}).
\end{equation*}%
Since the $\mathbb{G}\left( i,\left( g-g_{i}\right) G_{i}^{-1}\right) $ are
independent, centered and Gaussian, we get that 
\begin{equation*}
A_{1}=\mathbb{E}C^{2}(1)=\sum_{i=1}^{K}p_{i}\mathbb{EG}%
^{2}(i,(g-g_{i})G_{i}^{-1})
\end{equation*}%
\begin{equation*}
=\sum_{i=1}^{K}p_{i}\left\{ \mathbb{E}(g-g_{i})^{2}(Y^{i})-(\mathbb{E}%
(g-g_{i})(Y^{i}))^{2}\right\} .
\end{equation*}%

\noindent In the sequel we take

\begin{equation*}
g_{0}\left( x\right) =\overline{g}_{0}\left( x\right) \times e(x)\text{ and }%
\nu_{0}\left( x\right) =\overline{\nu }_{0}\left( x\right) \times e(x),
\end{equation*}%

\noindent and

\begin{equation*}
\left( g_{0},\nu _{0}\right) \in \left( g,g_{1},...,g_{K}\right) \times
\left( \nu ,\nu _{1},...,\nu _{K}\right) \text{ and }i=1,...,K.
\end{equation*}

\noindent Then we arrive

\begin{equation*}
A_{1}=\sum_{i=1}^{K}p_{i}\left\{ \int_{0}^{G_{i}(Z)}(\overline{g}-\overline{g%
}_{i})^{2}(G_{i}^{-1}(t))dt-\left( \int_{0}^{G_{i}(Z)}(\overline{g}-%
\overline{g}_{i})(G_{i}^{-1}(t))dt\right) ^{2}\right\} .
\end{equation*}%
Secondly, one has%
\begin{equation*}
C(n,2)=\sum_{i=1}^{K}\left( \frac{n_{i}}{n}\right) ^{1/2}\left\{ \frac{1}{%
\sqrt{n_{i}}}\sum_{j=1}^{n_{i}}\left\{
G_{i,n_{i}}(Y_{ij})-G_{i}(Y_{ij})\right\} \left( \frac{n_{i}}{n}\nu -\nu
_{i}\right) (Y_{ij})\right\} .
\end{equation*}%
We have%
\begin{equation*}
\frac{1}{\sqrt{n_{i}}}\sum_{j=1}^{n_{i}}\left\{
G_{n_{i}}(Y_{ij})-G_{i}(Y_{ij})\right\} \left( \frac{n_{i}}{n}\nu -\nu
_{i}\right) (Y_{ij})
\end{equation*}%
\begin{equation*}
=\int_{0}^{1}-\varepsilon _{n_{i}}(i,s)(p_{i}\nu -\nu
_{i})(G_{i}^{-1}(s))ds+o_{P}(1)
\end{equation*}%
\begin{equation*}
=\int_{0}^{1}\mathbb{G}_{n_{i}}(i,s)(p_{i}\nu -\nu
_{i})(G_{i}^{-1}(s))ds+o_{P}(1)
\end{equation*}%
\begin{equation*}
\rightarrow \int_{0}^{1}\mathbb{G}(i,s)(p_{i}\nu -\nu _{i})(G_{i}^{-1}(s))ds,
\end{equation*}%
and thus%
\begin{equation}
C(n,2)\rightarrow C(2)=\sum_{i=1}^{K}p_{i}^{1/2}\int_{0}^{1}\mathbb{G}%
(i,s)(p_{i}\nu -\nu _{i})(G_{i}^{-1}(s))ds.  \label{l02}
\end{equation}%
Finally, one has
\begin{equation*}
C(n,3)=\sum_{i=1}^{K}\left( \frac{n_{i}}{n}\right) ^{1/2}\sum_{h\neq i}^{K}%
\frac{n_{h}}{n}\frac{1}{\sqrt{n_{i}}}\sum_{j=1}^{n_{i}}\left[
G_{n_{h}}(Y_{ij})-G_{h}(Y_{ij})\right] \nu (Y_{ij}).
\end{equation*}%
But, for each fixed $i\in \{1,..,K\},$%
\begin{equation*}
\left\{ \frac{1}{\sqrt{n_{i}}}\sum_{j=1}^{n_{i}}\left[
G_{n_{h}}(Y_{ij})-G_{h}(Y_{ij})\right] \nu (Y_{ij})\right\}
\end{equation*}%
\begin{equation*}
=\int_{0}^{1}\sqrt{n_{i}}\left\{
G_{n_{h}}(G_{i}^{-1}(V_{n_{i}}(i,s)))-G_{h}(G_{i}^{-1}(V_{n_{i}}(i,s)))%
\right\} \times \nu (G_{i}^{-1}(V_{n_{i}}(i,s)))ds.
\end{equation*}%
We remember that $\nu $ is of the form%
\begin{equation*}
\nu (y)=\nu _{a}(y)1_{(y\leq Z)}
\end{equation*}%
where $\nu _{a}$ is continuous on compact sets $[0,L]$, $L>0$. Since, as $n\rightarrow \infty ,$ 
\begin{equation*}
\sup_{s\in (0,1)}\left\vert V_{n_{i}}(i,s)-s\right\vert \rightarrow 0,\text{ 
}a.s,
\end{equation*}%
we see that, for large values of $n$, theses integrals are performed at most
on some interval $[0,G_{i}(Z)+\varepsilon ],$ which includes those $s$
satisfying $V_{n_{i}}(i,s)\leq G_{i}(Z).$ By the assumptions, the functions $%
\nu _{a}$ and $G$ are continuous on such compact sets. Thus%
\begin{equation*}
\left\{ \frac{1}{\sqrt{n_{i}}}%
\sum_{j=1}^{n_{i}}[G_{n_{h}}(Y_{ij})-G_{h}(Y_{ij})]\nu (Y_{ij})\right\}
\end{equation*}%
\begin{equation*}
=\sqrt{\frac{n_{i}}{n_{h}}}\int_{0}^{1}\mathbb{G}%
_{h,n_{h}}(h,G_{h}(G_{i}^{-1}(V_{n_{i}}(i,s)))\times \nu
(G_{i}^{-1}(V_{n_{i}}(i,s)))ds
\end{equation*}%
\begin{equation*}
=\sqrt{\frac{n_{i}}{n_{h}}}\int_{0}^{1}\mathbb{G}%
_{h,n_{h}}(h,G_{h}(G_{i}^{-1}(V_{n_{i}}(i,s)))\times \nu
(G_{i}^{-1}(s))ds+o_{P}(1).
\end{equation*}%
Next%
\begin{equation*}
=\sqrt{\frac{n_{i}}{n_{h}}}\int_{0}^{1}G_{n_{h}}(h,G_{h}(G_{i}^{-1}(s))%
\times \nu (G_{i}^{-1}(s))ds+R_{n}+o_{P}(1),
\end{equation*}%
with%
\begin{equation*}
R_{n}=\int_{0}^{1}\left\{ \mathbb{G}%
_{h,n_{h}}(h,G_{h}(G_{i}^{-1}(V_{n_{i}}(i,s)))-\mathbb{G}%
_{h,n_{h}}(h,G_{h}(G_{i}^{-1}(s))\right\} \times \nu (G_{i}^{-1}(s))ds
\end{equation*}%
and%
\begin{equation*}
\left\vert R_{n}\right\vert \leq \int_{0}^{G_{i}(Z)+\varepsilon }\left\vert 
\mathbb{G}_{h,n_{h}}(h,G_{h}(G_{i}^{-1}(V_{n_{i}}(i,s)))-\mathbb{G}%
_{h,n_{h}}(h,G_{h}(G_{i}^{-1}(s))\right\vert \times \nu (G_{i}^{-1}(s))ds.
\end{equation*}%
We surely have, by continuity of $G_{h}$ on $\left( 0,G_{i}^{-1}\left(
G(Z)+\varepsilon \right) \right) ,$ 
\begin{equation*}
\sup_{s\leq G_{i}(Z)+\varepsilon }\left\vert
G_{h}(G_{i}^{-1}(V_{n_{i}}(i,s)))-G_{h}(G_{i}^{-1}(s))\right\vert
=a_{n}\rightarrow 0.
\end{equation*}%
We obtain here a continuous modulus of the uniform empirical process (see
Shorrack and wellner \cite{shwell}, page 531) and then%
\begin{equation*}
\sup_{s\leq G_{i}(Z)+\varepsilon }\left\vert \left\{
G_{h,n_{h}}(h,G_{h}(G_{i}^{-1}(V_{n_{i}}(i,s)))-G_{h,n_{h}}(h,G_{h}(G_{i}^{-1}(s))\right\} \right\vert =O(%
\sqrt{-a_{n}\log a_{n}}).
\end{equation*}%
We finally get%
\begin{equation*}
R_{n}=O\left( \sqrt{-a_{n}\log a_{n}}\right) \int_{0}^{1}\nu
(G_{i}^{-1}(s))ds\rightarrow 0
\end{equation*}%
and we arrive at%
\begin{equation}
C(n,3)\rightarrow C(3)=\sum_{i=1}^{K}\sqrt{p_{i}}\sum_{h\neq
i}^{K}p_{h}\int_{0}^{1}\mathbb{G(}h,G_{h}(G_{i}^{-1}(s))\times \nu
(G_{i}^{-1}(s))ds.  \label{ll03}
\end{equation}%
We are now going to compute the variance $\vartheta _{1}^{2}$ based on the
independent functional Browian bridges $\mathbb{G}(i,\cdot )$ which are limits of the
functional empirical process $\mathbb{G}_{n}(i,\cdot )$ \ respectively
associated with $\{G_{i}(Y_{i,j}),1\leq i\leq n_{i}\}$, $i=1,..,K.$
Straightforward calculations give what comes. First%
\begin{equation*}
A_{1}=\mathbb{E}C^{2}(1)=\sum_{i=1}^{K}p_{i}\mathbb{EG}%
^{2}(i,(g-g_{i})G_{i}^{-1}).
\end{equation*}%
We denote $l_{i}=(g-g_{i})G_{i}^{-1}$ in the sequel for sake of simplicity.
Next for%
\begin{equation*}
C(2)=\sum_{i=1}^{K}p_{i}^{1/2}\int_{0}^{1}\mathbb{G}(i,s)(p_{i}\nu -\nu
_{i})(G_{i}^{-1}(s))ds
\end{equation*}%
we have%
\begin{equation*}
A_{2}=\mathbb{E}(C^{2}(2))=\sum_{i=1}^{K}p_{i}\int_{0}^{1}\int_{0}^{1}(s%
\wedge t-st)c_{i}(t)c_{i}(s)dsdt
\end{equation*}%
\begin{equation*}
=\sum_{i}^{K}p_{i}\int_{0}^{G_{i}(Z)}\int_{0}^{G_{i}(Z)}(s\wedge t-st)(p_{i}%
\overline{\nu }-\overline{\nu }_{i})(G_{i}^{-1}(s))(p_{i}\overline{\nu }-%
\overline{\nu }_{i})(G_{i}^{-1}(t))dsdt,
\end{equation*}%
where $c_{i}(t)=(p_{i}\nu -\nu _{i})\left( G_{i}^{-1}(t)\right)$. Now for%
\begin{equation*}
C(3)=\sum_{i=1}^{K}\sqrt{p_{i}}{}\sum_{h\neq i}^{K}p_{h}\int_{0}^{1}\mathbb{%
G(}h,G_{h}(G_{i}^{-1}(s)))\times \nu (G_{i}^{-1}(s))ds,
\end{equation*}%
we have%
\begin{equation*}
A_{3}=\mathbb{E}(C^{2}(3))=\mathbb{E}\left\{ \sum_{i=1}^{K}p_{i}(\sum_{h\neq
i}^{K}K_{i,h})^{2}+\sum_{i\neq j}^{K}(p_{i}p_{j})^{1/2}(\sum_{h\neq
i}^{K}K_{i,h})(\sum_{h^{\prime }\neq j}^{K}K_{i,h^{\prime }})\right\} .
\end{equation*}%

\noindent Put
$$
K_{i,h}=p_{h}\int_{0}^{1}\mathbb{%
G(}h,G_{h}(G_{i}^{-1}(s)))\times \nu (G_{i}^{-1}(s))ds,
$$

split $A_{3}$ into

$$
A_{31} = \mathbb{E} \left(  \sum_{i=1}^{K} p_{i} ( \sum_{h\neq i}^{K} K_{i,h} )^{2} \right)
$$

\noindent and

\begin{equation*}
A_{32}=\mathbb{E}(\sum_{i=1}^{K}\sum_{i\neq j}^{K}\left( p_{i}p_{j}\right)
^{1/2}(\sum_{h\neq i}^{K}K_{i,h})(\sum_{h^{\prime }\neq j}^{K}K_{i,h^{\prime
}}))^{2}.
\end{equation*}

\noindent Now by using the independence of the centered stochastic process $\mathbb{G}(h, \cdots)$ for differents values of $h \in \{1,...,K\}$, one gets

$$
A_{31} =\mathbb{E} \left(   \sum_{i=1}^{K}p_{i}\sum_{h\neq i}^{K}K_{i,h}^{2} \right)
$$

\noindent and then

$$
A_{31} =\sum_{i=1}^{K}p_{i}\sum_{h\neq i}^{K}p_{h}^{2} \\
\int_{0}^{G_{i}(Z)}\int_{0}^{G_{i}(Z)} \left[ {G_{h}(G_{i}^{-1}(s))\wedge
G_{h}(G_{i}^{-1}(t))} \right.
$$

$$
\left. {-G_{h}(G_{i}^{-1}(s))G_{h}(G_{i}^{-1}(t))} \right] 
 \overline{\nu }(G_{i}^{-1}(s))\overline{\nu }(G_{i}^{-1}(t))dsdt.
$$

\noindent Next, one has

$$
A_{32}=\mathbb{E} \sum_{i=1}^{K}\sum_{j\neq i}^{K}(p_{i}p_{j})^{1/2}\sum_{h\neq i,h^{\prime
}\neq j}^{K}p_{h}p_{h^{\prime }} \int_{0}^{1}\int_{0}^{1}
$$

$$
 \mathbb{G(}h,G_{h}(G_{i}^{-1}(s)) 
\mathbb{G}(h^{\prime},G_{h^{\prime}}(G_{j}^{-1}(t)))  \nu(G_{i}^{-1}(s)) \nu(G_{j}^{-1}(t))dtds
$$

$$
=\sum_{i=1}^{K}p_{i}^{1/2}\sum_{j\neq i}^{K}p_{j}^{1/2}\sum_{h\notin
\{i,j\}}^{K}p_{h}^{2} \int_{0}^{G_{i}(Z)}\int_{0}^{G_{j}(Z)} \\
\left[ {G_{h}(G_{i}^{-1}(s))\wedge G_{h}(G_{j}^{-1}(t))} \right.
$$

$$
\left. {-G_{h}(G_{i}^{-1}(s))G_{h}(G_{j}^{-1}(t))} \right] \overline{\nu }(G_{i}^{-1}(s))\overline{\nu }(G_{j}^{-1}(t))dsdt,
$$

\noindent Now we have
\begin{equation*}
C(1)C(2)=\left( \sum_{i=1}^{K}p_{i}^{1/2}\mathbb{G}(i,\ell _{i})\right)
\left( \sum_{i=1}^{K}p_{i}^{1/2}\int_{0}^{1}\mathbb{G}(i,s)c_{i}(s)\text{ }%
ds\right)
\end{equation*}%

\begin{equation*}
=\sum_{i=1}^{K}\sum_{j=1}^{K}\left( p_{i}p_{j}\right) ^{1/2}\int_{0}^{1}%
\mathbb{G}(i,s)c(s)\mathbb{G}(j,\ell _{j})\text{ }c_{i}(s)\text{ }ds.
\end{equation*}%

\noindent and get 
\begin{equation*}
B_{1}=\mathbb{E}C(1)C(2)=\sum_{i=1}^{K}p_{i}\int_{0}^{1}\mathbb{E}(\mathbb{G}%
(i,s)\mathbb{G}(j,\ell _{i})\text{ }c_{i}(s)ds
\end{equation*}%
\begin{equation*}
=\sum_{i=1}^{K}p_{i}\int_{0}^{1}\left\{ \int_{-\infty
}^{G_{i}^{-1}(s)}(g-g_{i})(y)dG_{i}(y)-s\mathbb{E(}g-g_{i}\mathbb{)(}Y^{i}%
\mathbb{)}\right\} c_{i}(s)ds
\end{equation*}%

$$
=\sum_{i=1}^{K}p_{i} \int_{0}^{G_{i}(Z)}\left\{ {\int_{0}^{s\wedge G_{i}(Z)}(\overline{g}-%
\overline{g}_{i})(G_{i}^{-1}(t))dt} \right.
$$

$$
\left. {-s\int_{0}^{1}(\overline{g}-\overline{g}%
_{i})(G_{i}^{-1}(t))dt} \right\} (p_{i}\overline{\nu }-\overline{\nu }_{i})(G_{i}^{-1}(s))ds,
$$

We have next
$$
C(2)C(3)= \left( \sum_{i=1}^{K}p_{i}^{1/2}\int_{0}^{1}\mathbb{G}(i,s)c_{i}(s)ds\right)
$$

$$
\times \left( \sum_{i=1}^{K}p_{i}{}^{1/2}\sum_{h\neq i}^{K}p_{h}\int_{0}^{1}%
\mathbb{G(}h,G_{h}(G_{i}^{-1}(s))\times \nu (G_{i}^{-1}(s))ds\right)
$$

$$
=\sum_{i=1}^{K}\sum_{j=1}^{K}p_{i}^{1/2}p_{j}\sum_{i\neq j}^{K}p_{i} 
\int_{0}^{1}\int_{0}^{1}\mathbb{G(}i.s\mathbb{G}(i,G_{i}(G_{j}^{-1}(t)) c_{i}(s)\nu (G_{j}^{-1}(t))\mathbb{)}dsdt.
$$

\noindent It comes that%
$$
B_{2} = \mathbb{E}C(2)C(3)=\sum_{i=1}^{K}p_{i}^{3/2}\sum_{j\neq
i}^{K}p_{j}^{1/2} \int_{0}^{G_{i}(Z)}\int_{0}^{G_{j}(Z)}
$$

$$
[s\wedge
G_{i}(G_{j}^{-1}(t))-sG_{i}(G_{j}^{-1}(t))] 
\times (p_{i}\overline{\nu }-\overline{\nu }_{i})(G_{i}^{-1}(s))\overline{%
\nu }(G_{j}^{-1}(t))dsdt,
$$

\noindent Now finally for%

$$
C(1)C(3)=\left( \sum_{i=1}^{K}p_{i}^{1/2}\mathbb{G}(i,\ell _{i})\right) 
$$

$$
\times \left( \sum_{i=1}^{K}p_{i}^{1/2}{}\sum_{h\neq i}^{K}p_{h}\int_{0}^{1}%
\mathbb{G(}h,G_{h}(G_{i}^{-1}(s))\times \nu (G_{i}^{-1}(s))ds\right)
$$
\begin{equation*}
=\sum_{i=1}^{K}\sum_{j\neq i}^{K}p_{i}^{1/2}p_{j}^{1/2}\sum_{h\neq
i}^{K}p_{h}\int_{0}^{1}\mathbb{G(}h,G_{h}(G_{j}^{-1}(s))\mathbb{G}(i,\ell
_{i})\times \nu (G_{j}^{-1}(s))ds,
\end{equation*}%
where the $\ell_{i}'s$ are defined in \ref{def0}, we have%
\begin{equation*}
B_{3}=\mathbb{E}C(1)C(3)
\end{equation*}%
\begin{equation*}
=\sum_{i\neq j}^{K}p_{i}p_{j}\int_{0}^{1}\mathbb{E}\left\{ \mathbb{G}(i,\ell
_{i})\mathbb{G(}i,G_{i}(G_{j}^{-1}(s))\right\} \times \nu (G_{j}^{-1}(s))ds
\end{equation*}%

$$
\sum_{i=1}^{K}p_{i}^{3/2}\sum_{j\neq i}^{K}p_{j}^{1/2}
\int_{0}^{G_{j}(Z)} \left\{ {\int_{0}^{G_{i}(G_{j}^{-1}(s))\wedge G_{i}(Z)}(\overline{g}-%
\overline{g}_{i})(G_{i}^{-1}(t))dt} \right.
$$

$$
\left. {-G_{i}(G_{j}^{-1}(s))\int_{0}^{1}(%
\overline{g}-\overline{g}_{i})(G_{i}^{-1}(t))dt} \right\} 
\overline{\nu }(G_{j}^{-1}(s))ds.
$$

\noindent We have now finished the variance computation, that is 
\begin{equation*}
\vartheta _{1}^{2}=A_{1}+A_{2}+A_{3}+2(B_{1}+B_{2}+B_{3})
\end{equation*}
\end{proof}

$\bigskip $

\section{Conclusion}

\label{sec7}

\noindent We just illustrated how apply our results for the Sen Measure and
the Senegalese database ESAM I and the Mauritanian EPCV 2004 data. But It would be more interesting and
instructive to conduct large scale data-driven for the West African databases for example, for several measures. It would also be interesting to see the influence of the Kakwani parameter $k$ on the results. This study is
underway.

\section{Appendix} \label{sec8}
We would like to provide indications to the reader for using the techniques developped here. We have a zipped file at :
$$http://www/ufrsat.org/lerstad/sen-decomposabilite.rar$$
It includes the executable \textit{sendecomp.exe} file which performs the computation of $dg$. Here is how to proceed :\\
\begin{itemize}
\item[(i)] Download the zipped file and unzip him in a folder named, for instance, sen-decomposabilite.
\item[(ii)] Upload in the \textit{sen-decomposabilite} folder the following user files : The income file \textit{dep.txt} of size $n$ at most equal to $10 000$, the equivalent-adult file \textit{eq.txt} of the same size $n$ and finally the labels file \textit{labels.txt} including the names of the different strates. If the income file is already scaled for individuals, use an \textit{eq.txt} file of size $n$ having unity at each line. Le nomber of labels
is at most equal to $15$. They must be enumarated from to $1$ to $KK<16$.
\item[(iii)] Execute \textit{sendecomp.exe} by clicking on it. The user is prompted to provide the income file name, the equivalen-adult file name and the labels file name \textbf{without} the suffixs \textit{.txt}.
\item[(v)] The package provides the sen measures value for the differents strates and report the gap of decomposability value.
\item[(vi)] For the user's practice we provided in the zipped folder the following income variables (\textit{depm.txt}), equivalent-adult variable (\textit{eom.txt}) and labels (here areas) file named after \textit{regm.txt}.
\item[(vi)] If the data size exceeds $n=10000$ or the strates number exceeds $KK=15$, the user is free to write to the authors and adapted packages will be provided.
\end{itemize}

\bigskip
\noindent Finally for those who want to set their own packages in some langage, we provide a Visual Basic module including the main program and the subroutines.

\newpage


\begin{thebibliography}{99}

\bibitem{barrett} Barrett G. and Donald, S. (2000). Statistical Inference
with Generalized Gini Indices of Inequality and Poverty. Available at :
(http://www.eco.utexas.edu/\symbol{126}donald/research/genginir.pdf)

\bibitem{bchowzheng} Bishop J. A., Chow K. V., and Zheng B.(1995). Statistical
Inference and Decomposable Poverty Measures. \textit{Bulletin of Economic
Research}, \textbf{47}, pp.329-340.

\bibitem{bfzheng} Bishop J.A., Formby J.P., B.Zheng(1997). Statistical
Inference and the Sen Index of Poverty. \textit{International Economic Review%
}, Vol. 38, No. 2. pp. 381-387.

\bibitem{chakravarty} Chakravarty S.R.(1983). A new Poverty Index. 
\textit{Mathematical Social Science} 6, 307-313.

\bibitem{dia} Dia, G. (2005). R\'{e}partition Ponctuelle Al\'{e}atoire des
Revenus et Estimation de l'Indice de Pauvret\'{e}. \textit{Afrik. Statist.}, \textbf{1} 1), p.47-66.

\bibitem{duclosdav} Davidson R. and Duclos J.Y.(2000). Statistical Inference
for Stochastic Dominance and for the Measurement of Poverty and Inequality. 
\textit{Econometrica}, \textit{68} (6), pp.1435-1464.

\bibitem{fgt} Foster, J., Greer, J. and Shorrocks, A.(1984). A class of
Decomposable Poverty Measures. \textit{Econometrica} 52, 761-766.

\bibitem{flachaire} Flachaire E. and Davidson. R.(2007). Asymptotic and
bootstrap inference for inequality and poverty measures (with R. Davidson),
Journal of Econometrics, 2007, 141(1), 141-166.

\bibitem{giné} Gin\'{e}, E. and Zinn, J. (1980). Boostrapping general
empirical processes. Ann. Probab. \textbf{18} (2), 851-869.

\bibitem{halo} Haidara, M. C and G. S. Lo. (2009). Sur la d\'{e}composabilit%
\'{e} empirique des indicateurs de pauvret\'{e}. LERSTAD.
http://www.ganesamblo.net/haidara\_lo\_edpi.pdf

\bibitem{kakwani} Kakwani, N.(1980). On a Class of Poverty Measures. \textit{%
Econometrica}, 48, 437-446.

\bibitem{losto} Lo G. S.(2010). A simple note on some empirical stochastic process as a tool in uniform L-statistics weak laws. \textit{Afrik. Statist.}, (5), pp. 437-446.

\bibitem{lo} Lo. G.S.(2009). Estimation asymptotique des indices de pauvret\'{e} :
mod\'{e}lisation continue et analyse spatio-temporelle de la pauvret\'{e} au
S\'{e}n\'{e}gal. \textit{J. Afric. Commun. Sci. Techn.}, \textbf{4}, (2009),
pp.341-377.

\bibitem{slrep} Lo G. S. and Sall, S.T.(2010). Asymptotic Representation
Theorems for Poverty Indices. \textit{Afrik. Statist.}, (5), pp. 238-244.

\bibitem{lo2} Lo, G.S., Sall. S. and Seck, C.T.(2006). Une Th\'{e}orie
asymptotique des indicateurs de pauvret\'{e}. C. R. Math. Acad. Sci. Soc. R. Can. 31 (2009), no. 2, 45-52. (MR2535867), (2010m:91167)


\bibitem{sl} Sall, S.T. and Lo, G.S., (2010). Uniform Weak Convergence of
the time-dependent poverty Measure for Continuous Longitudinal Data. \textit{%
Braz. J. Probab. Stat.}., 24, (3), 457-467

\bibitem{lo1} Sall, S.T. and Lo, G.S., (2007). The Asymptotic Theory of the
Poverty Intensity in View of Extreme Values Theory For Two Simple Cases. 
\textit{Afrik. Statist.}, vol 2, $n^{0}1$, p.41-55

\bibitem{slrep} Lo G. S. and Sall, S.T.(2010). Asymptotic Representation
Theorems for Poverty Indices. \textit{Afrik. Statist.}, (5), pp. 238-244.
 

\bibitem{sen} Sen A. K.(1976). Poverty: An Ordinal Approach \ to
Measurement. \textit{Econometrica},\ 44, 219-231.

\bibitem{shwell} Shorack G.R. and Wellner J. A.(1986). Empirical Processes
with Applications to Statistics, wiley-Interscience, New-York.

\bibitem{shorrocks} Shorrocks A. (1995). Revisiting the Sen Poverty Index. 
\textit{Econometrica}, 63, 1225-1230.

\bibitem{thon}  Thon, D.(1979). On Measuring Poverty. \textit{Review
of Income and Wealth} 25, 429-440.

\bibitem{vaart} van der Vaart A. W. and J. A. Wellner(1996). \textit{Weak
Convergence and Empirical Processes With Applications to Statistics}.
Springer, New-York.

\bibitem{zheng} Zheng, B.(1997). Aggregate Poverty Measures. \textit{Journal
of Economic Surveys}, 11 (2), 123-162.
\end{thebibliography}
\end{document}